\declaretheorem{theorem}
\declaretheorem[sibling=theorem]{lemma}
\newcommand{\beqa}{\begin{eqnarray*}}
\newcommand{\eeqa}{\end{eqnarray*}\par\noindent}
\newcommand{\PP}{\mathcal{P}}
\newcommand{\HH}{\mathcal{H}}
\newcommand{\KK}{\mathcal{K}}
\newcommand{\ket}[1]{{|} #1\rangle}
\newcommand{\ie}{\textit{i.e.}~}
\newcommand{\Complex}{\mathbb{C}}
\newcommand{\Nat}{\mathbb{N}}
\newcommand{\AND}{\; \wedge \;}
\newcommand{\Tr}{\mathsf{Tr}}
\newcommand{\UU}{\mathcal{U}}
\renewcommand{\emph}{\textbf}
\newcommand{\Pn}{\mathcal{P}_n}
\newcommand{\Un}{\mathbf{U}}
\begin{document}

\title{Hardy is (almost) everywhere: \\nonlocality without inequalities for almost all entangled multipartite states}
\author[csox]{Samson Abramsky}
\ead{samson.abramsky@cs.ox.ac.uk}
\author[csox]{Carmen M. Constantin}
\ead{i.m.carmen@gmail.com}
\author[tsing]{Shenggang Ying}
\ead{yingshenggang@gmail.com}
\address[csox]{Department of Computer Science, University of Oxford \\Wolfson Building, Parks Road, Oxford OX1 3QD, U.K.}
\address[tsing]{Department of Computer Science, Tsinghua University, Beijing, China}

\date{}

\begin{abstract}
We show that all $n$-qubit entangled states, with the exception of tensor products of single-qubit and bipartite maximally-entangled states, admit Hardy-type proofs of non-locality without inequalities or probabilities. More precisely, we show that for all such states, there are local, one-qubit observables such that the resulting probability tables are logically contextual in the sense of Abramsky and Brandenburger, this being the general form of the Hardy-type property. Moreover, our proof is constructive: given a state, we show how to produce the witnessing local observables. In fact, we give an algorithm to do this. Although the algorithm is reasonably straightforward, its proof of correctness is non-trivial. A further striking feature is that we show that $n+2$ local observables suffice to witness the logical contextuality of any $n$-qubit state: two each for two for the parties, and one each for the remaining $n-2$ parties.
\end{abstract}

\maketitle

\section{Introduction}

Non-locality theory is a central pillar of both quantum information and quantum foundations. The founding result, Bell's theorem, is Bell's fundamental contribution \cite{bell1964einstein}, which was subsequently given its standard form using the CHSH inequalities by Clauser, Horne, Shimony and Holt \cite{clauser1969proposed}. We shall refer to this as the Bell/CHSH argument. 
This original, and still standard, form of the argument made essential use of the probabilistic predictions of quantum mechanics to separate them from those of any local realistic theory.
Subsequently, Greenberger, Horne, Zeilinger and Shimony \cite{greenberger1989going,greenberger1990bell} gave an important strengthening of Bell's theorem, to an \emph{inequality-free} form, using only quantum predictions made with certainty. The ``price'' for this strengthened form of the theorem is that, while Bell's argument applies to bipartite, 2-qubit systems, the GHZ argument needs at least 3 qubits.
A further important step was taken by Hardy \cite{hardy1992quantum,hardy1993nonlocality}, who showed that an inequality-free proof of Bell's theorem could be given for a 2-qubit system. A curious feature of Hardy's construction is that it works for any bipartite entangled state, \emph{except} for the maximally entangled states.
Subsequently, there have been many extensions and generalisations of these results, e.g. \cite{cabello2001bell,wang2012nonlocality,mansfield2012hardy}.

In \cite{abramsky2011unified}, a general mathematical theory of non-locality and contextuality was developed in a sheaf-theoretic framework. In particular, the content of the Bell/CHSH, GHZ and Hardy arguments was abstracted into general properties of ``empirical models'', \ie of the probability tables summarising the empirical predictions for various combinations of measurements.
An important point that was revealed by this analysis is that the Bell/CHSH, Hardy and GHZ arguments correspond to three different \emph{strengths} or \emph{degrees} of contextual behaviour. We have
\[ \mbox{Bell} \; < \; \mbox{Hardy} \; < \; \mbox{GHZ} . \]
In the terminology of \cite{abramsky2011unified}, the GHZ argument hinges on \emph{strong contextuality}, which can be exhibited by certain quantum systems, and in particular in the usual $n$-qubit multipartite Bell scenarios, as soon as we have 3 or more parties. The Hardy argument hinges on a weaker property, which we termed \emph{logical} or \emph{possibilistic} contextuality in \cite{abramsky2011unified}. This property is in turn stronger than the probabilistic contextuality used in the Bell/CHSH argument.
These notions form a proper hierarchy: strong contextuality is strictly stronger than logical contextuality, which is strictly stronger than probabilistic contextuality.

What distinguishes both logical and strong contextuality from probabilistic contextuality is that they rely only on the \emph{support} of the probability distributions.
That is, these forms of contextuality rely only on whether the probabilities of events are $0$ (impossible) or $> 0$ (possible). In this sense, they are indeed inequality-free \cite{greenberger1990bell,hardy1993nonlocality}, and even probability-free \cite{zimba1993bell,cabello2001bell}. The constructions in the Bell/CHSH, Hardy and GHZ arguments show that certain quantum states and local observables (\ie one-qubit measurements which can be made by each of the parties on their part of the state) give rise to probability tables which are, respectively, probabilistically, logically or strongly contextual.

This leads to the following natural question:

\begin{framed}
%\begin{quotation}
\noindent Given a quantum state, find the maximum level of contextuality it can exhibit for some choice of local observables.
%\end{quotation}
\end{framed}

In particular, we can ask:
\begin{framed}
%\begin{quotation}
\noindent For which quantum states can we find local observables which give rise to a logically contextual (inequality-free, probability-free) form of contextuality?
%\end{quotation}
\end{framed}

In this paper, we shall completely answer this question for $n$-qubit pure states, for all $n$.
Our main result is as follows.

\begin{theorem}
%[Main Theorem]
\label{mainth}
Let $\ket{\psi}$ be an $n$-qubit pure state. Then exactly one of the following two cases must hold:
\begin{enumerate}
\item $\ket{\psi}$ can be written (up to permutation of tensor factors) as a product
\begin{equation}
\label{prodform}
 \ket{\psi} \; = \; \ket{\psi_1} \otimes \cdots \otimes \ket{\psi_k} 
 \end{equation}
where each $\ket{\psi_i}$ is either a 1-qubit state, or a 2-qubit maximally entangled state.
\item There are local observables such that the probability table arising from $\ket{\psi}$ and these local observables is logically contextual, that is, it admits a ``Hardy paradox'', \ie an inequality-free, probability-free proof of non-locality.
\end{enumerate}
If (2) holds, then only $n+2$ local observables are needed; two each for two of the parties, and one each for the other $n-2$ parties.
Moreover, there is an algorithm to decide which of the above cases holds, and which in case (2) explicitly computes the witnessing local observables. The complexity of this algorithm is $O(d \log^3 d)$ in the dimension $d = 2^n$.
\end{theorem}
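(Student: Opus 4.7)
The plan is to prove each direction of the dichotomy separately and extract the algorithm from the constructive reverse direction. For the forward (easy) direction, I would show that whenever $\ket{\psi}$ has the form \eqref{prodform}, no choice of local observables can produce a logically contextual table: single-qubit factors contribute only deterministic local marginals, and bipartite maximally entangled factors are classically known (from Hardy's original analysis \cite{hardy1993nonlocality}) not to support Hardy-type paradoxes, since their joint outcome distribution is uniformly supported in every pair of local bases. Combining these using the fact that a tensor product of non-logically-contextual empirical models is itself non-logically-contextual, which follows straightforwardly from the sheaf-theoretic framework of \cite{abramsky2011unified}, dispatches this direction.

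For the reverse (hard) direction, I would proceed by induction on $n$. The base case $n=2$ is essentially Hardy's theorem: any bipartite entangled state that is not maximally entangled admits, in a suitable pair of non-commuting local bases, a probability table that is logically contextual using two observables per party, so that $n+2 = 4$ observables suffice. For the inductive step with $n \geq 3$, the strategy is to find a party $i$ and a one-qubit projective measurement on $i$ such that, conditioning on one of its two outcomes, the resulting $(n-1)$-qubit post-measurement state $\ket{\psi'}$ is again \emph{not} in the product form \eqref{prodform}. By induction, $\ket{\psi'}$ admits witnessing observables for the remaining $n-1$ parties; prepending the single measurement on party $i$ produces $n+2$ observables in total, and the logical contextuality of the conditional subtable lifts to the full table because the conditioning outcome has positive probability and the conditional events it selects remain possible/impossible as required by the Hardy relations.

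The main obstacle is the reduction lemma: proving that such a reducing measurement always exists whenever $\ket{\psi}$ itself is not in product form. The contrapositive is cleaner: if every one-qubit projective measurement on every party yields, along both outcome branches, a post-measurement state of product form \eqref{prodform}, then $\ket{\psi}$ must itself be of product form \eqref{prodform}. I would attack this via Schmidt decompositions across every single-party cut, carefully tracking how the branchwise factorisations of the $(n-1)$-qubit residuals must assemble into a coherent global factorisation. The delicate point is ruling out configurations in which the two branches of a given party factor differently but still consistently; this is precisely where the $2$-qubit maximally entangled summands in \eqref{prodform} appear in a structural rather than incidental way, because they are exactly the objects whose branchwise residuals can conspire to be invariant across both outcomes of a non-trivial local measurement on their partner.

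Once the reduction lemma is in place, the algorithm is a straightforward recursive traversal: at each level, Schmidt-decompose $\ket{\psi}$ across each of the $n$ single-party cuts, test the $(n-1)$-qubit residuals for product form, and either output a certificate of product form or continue the recursion. The linear algebra at each level is dominated by manipulations on amplitude vectors of length $d = 2^n$, costing $O(d \log d)$ using fast basis-change routines, and the recursion gives the claimed $O(d \log^3 d)$ bound. Correctness of the algorithm is immediate from the constructive reduction argument; the complexity estimate is routine.
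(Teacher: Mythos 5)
Your overall architecture --- induction on $n$, Hardy's construction for the $2$-qubit base case, and a reduction step that conditions on one outcome of a one-qubit measurement and lifts the contextuality of the residual --- matches the paper's, and your ``reduction lemma'' is true and is indeed the crux. The problem is that you have left precisely the two hard parts of the paper unproven, and your sketches for both would fail as stated. First, the reduction lemma itself. The paper proves it (the ``Small Difference Lemma'') by fixing a single party, writing the state as $\alpha\ket{\psi}\ket{0}+\beta\ket{\phi}\ket{1}$, parameterising the residuals as $\tau(a)=a\ket{\psi}+\sqrt{1-a^2}\,\ket{\phi}$, and showing that if every $\tau(a)$ has product form then $\ket{\psi}$ and $\ket{\phi}$ differ in at most one qubit; the proof requires a compactness argument in $\mathrm{U}(2)^n$ combined with K\"onig's infinity lemma and partial inner products, because the entanglement type of the residual can a priori change abruptly as the measurement basis varies. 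Your plan to ``track how the branchwise factorisations assemble'' via Schmidt decompositions does not engage with this difficulty, and no amount of bookkeeping at a single basis per cut can substitute for controlling the whole one-parameter family.

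Second, and more concretely, your algorithm is not yet an algorithm: finding a reducing measurement requires searching a continuum of local bases, and testing only the Schmidt-basis residuals is unsound. For the GHZ state $\tfrac{1}{\sqrt 2}(\ket{000}+\ket{111})$ the Schmidt basis across every single-party cut is the computational basis and both residuals ($\ket{00}$ and $\ket{11}$) are products, yet GHZ is not of form (1); the witnessing measurement must be a non-Schmidt one (a generic basis yields a non-maximally entangled residual). The paper closes exactly this gap with the ``21 Lemma'': the conditions for a residual to lie in $\Pn$ are polynomial of bounded degree in the parameter $a$, so testing a fixed finite set of values either certifies that all residuals are in $\Pn$ or exhibits an explicit reducing value. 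Without such a finiteness argument your $O(d\log^3 d)$ claim has no content. Two smaller defects: your reason why maximally entangled bipartite states are never logically contextual is wrong --- their support is not full in matched bases (the outcomes $01$ and $10$ are impossible for $\tfrac{1}{\sqrt2}(\ket{00}+\ket{11})$ measured in the same basis on both wings); the statement for arbitrary finite sets of observables is a nontrivial theorem that must be cited or proved, not a corollary of Hardy's analysis of his own construction. And the ``lifting'' step needs care: the support of the augmented table on the opposite branch of the new measurement does not correspond to the support of the other residual, so the unsatisfiability argument must be redone with the extra outcome variable --- this is the content of the paper's Going Up Lemmas, not an immediate consequence of positive conditioning probability.
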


The remainder of the paper is devoted to the detailed statement and proof of this theorem. In particular, while the algorithm is quite simple, the arguments justifying its correctness are non-trivial.

In Section~2, we define logical contextuality in the $n$-qubit setting. In Section~3, we state the main lemmas, and give the top-level proof of the theorem from these lemmas. In Section~4, we prove the lemmas. In Section~5, we present the algorithm. Finally, in Section~6 we discuss some further directions.

\paragraph{Acknowledgements} We thank Chris Heunen for some helpful discussions and suggestions.

\section{Logical Contextuality}

Our setting consists of a general form of the probability models commonly studied in quantum information and quantum foundations. In these models, a number of agents each has the choice of one of several measurement settings. Moreover each measurement has a number of distinct outcomes. 
We consider here dichotomic measurements with possible outcomes labelled by the elements of $\mathbf{2}:=\{+,-\}$.

Thus, recalling the terminology developed in \cite{abramsky2011unified,abramsky2012logical}, a \emph{measurement scenario} is given by a pair $(X,\mathcal{U})$. 
Here $X$ is a set of measurement labels, while $\mathcal{U}$ is a family of subsets of $X$, giving the maximal sets of compatible measurements, or \emph{contexts}. 
%We consider here dichotomic measurements with possible outcomes labeled by the elements of $\mathbf{2}:=\{+,-\}$.

An empirical model, or generalized probability table, on $(X, \mathcal{U})$ is a family $e=\{e_U\}_{U\in \mathcal{U}}$ of probability distributions $e_U$ on the joint outcomes $\mathbf{2}^U$, one for each context $U$. The probability of obtaining any joint outcome $s\in \mathbf{2}^U$ is given by $e_U(s)$. The support of the model at $U$ is the set $S(U)\subseteq \mathbf{2}^U$ of those joint outcomes $s\in \mathbf{2}^U$ for which $e_U(s)>0$. 

We can make a connection to logic by interpreting the $+$ outcome as \textit{true} and the $-$ outcome as \textit{false}. This allows us to think of the set of measurements $X$ as a set of boolean variables. If $U$ is a finite context, any subset of $\mathbf{2}^U$ can be defined by a propositional formula. For example, each joint outcome $s:U\rightarrow \mathbf{2}$ determines a propositional formula 
$$\varphi_s \; = \; \bigwedge_{x\in U,\, s(x)=+}x \; \AND \; \bigwedge_{x\in U,\,s(x)=-} \neg x$$
The only satisfying assignment of $\varphi_s$ in $\mathbf{2}^U$ is $s$.

The propositional formula whose set of satisfying assignments is the support of $U$ is $\varphi_U:=\bigvee_{s\in S(U)} \varphi_s$.

An empirical model is \emph{logically contextual} if there exists some $U \in \UU$ and some $s\in S(U)$ such that the formula 
$$\Phi \; = \; \varphi_{s} \; \AND \; \bigwedge_{V\in \mathcal{U}\backslash U} \varphi_{V}$$ 
is not satisfiable. 
This says that there is a possible joint outcome $s$ which cannot be accounted for by any valuation on all the variables in $X$ which is consistent with the support of the model. This immediately implies that there is no joint distribution on all the observables which marginalizes to yield the empirically observable probabilities \cite{abramsky2011unified}.
As originally shown in the bipartite case by Fine \cite{fine1982hidden}, and in a very general form in \cite{abramsky2011unified}, the non-existence of a joint distribution is equivalent to the usual definition of non-locality as given by Bell \cite{bell1964einstein}.

An $n$-qubit quantum state $|\psi\rangle$, together with a set of 1-qubit local observables for each $i$, $1 \leq i \leq n$, gives  rise to an empirical model as above. An observable is represented by a $2\times 2$ self-adjoint matrix $M$. The two possible outcomes correspond to its eigenvalues, which we label $+$ and $-$. We use labels of the form $(M,i)$, corresponding to a measurement of observable $M$ performed by the $i^{th}$ party, to denote the elements of the set $X$ of all available measurements. 

A set $U$ of compatible measurements is given by a choice of observable at each of the $n$ measurement sites. A given quantum state $\ket{\psi}$ determines a probability distribution on $\mathbf{2}^U$, as follows.
If 
\[ U=\{(M_1,1),(M_2,2)\ldots (M_n,n)\} \]
is a compatible set of measurements, the probability of obtaining the joint outcome $s:U\rightarrow \mathbf{2}$ is given by the Born rule:
 \[ e_U(s) \; = \; |\langle \ket{\mu_1} \otimes \ket{\mu_2} \otimes \cdots \otimes \ket{\mu_n} |\psi\rangle |^2 , \]
 where $\ket{\mu_i}$ is the eigenvector corresponding to the eigenvalue $s(M, i)$ of the observable $(M, i)$ representing the measurement performed by the $i^{th}$ party. The joint outcome $s$ is in the support of the model if and only if this inner product is non-zero.

\section{The structure of the argument}
\label{Str}

In this section, we shall state a number of main lemmas, and show how Theorem~\ref{mainth} is proved from these lemmas.
The proofs of the lemmas will be given in the following section.

\textbf{Notation}. We shall write $\Pn$ for the set of $n$-qubit pure states of the form~(\ref{prodform}).
For succinctness, we shall write the tensor product of a ket $\ket{\psi}$ with a 1-qubit ket $\ket{i}$, $i=0,1$, as $\ket{\psi}\ket{i}$ rather than $\ket{\psi} \otimes \ket{i}$.

We shall prove Theorem~\ref{mainth} by induction on $n$. The case $n=1$ is trivial.
The $n=2$ case is given by the following lemma.

\begin{restatable}[\textbf{The Base Case Lemma}]{lemma}{basecase}
\label{ground}
Every 2-qubit state $\ket{\psi}$ is either in $\PP_2$ (\ie it is either a product state, or a maximally entangled bipartite state), or there are two local observables for each party, which can be computed directly from the Schmidt decomposition of $\ket{\psi}$, and which witness the logical contextuality of $\ket{\psi}$.
\end{restatable}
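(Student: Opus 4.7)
The plan is to reduce $\ket{\psi}$ to Schmidt form, dispatch the two degenerate cases, and in the remaining generic case construct explicit Hardy-witnessing observables. After applying local unitaries (which only amount to a change of basis for the observables we are free to choose), $\ket{\psi}$ can be written in Schmidt form $\lambda_0\ket{00}+\lambda_1\ket{11}$ with $\lambda_0 \geq \lambda_1 \geq 0$ and $\lambda_0^2+\lambda_1^2 = 1$. The two degenerate cases are $\lambda_1 = 0$ (a product state) and $\lambda_0 = \lambda_1 = \osqrt$ (a maximally entangled Bell state); both place $\ket{\psi}$ in $\PP_2$, so it suffices to treat the generic regime $0 < \lambda_1 < \lambda_0 < 1$.

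In that regime, I would parameterize each of the four one-qubit observables by a single real angle for its $+$-eigenvector, writing $\ket{a_i^+} = \cos\alpha_i\ket{0}+\sin\alpha_i\ket{1}$ for $i=0,1$ on Alice's side and analogously $\ket{b_j^+}$ with angles $\beta_0, \beta_1$ on Bob's side. The four joint amplitudes $\langle a_i^{\pm}b_j^{\pm}\mid\psi\rangle$ then become short trigonometric polynomials in these angles with coefficients $\lambda_0, \lambda_1$. Imposing the three Hardy impossibility constraints $\langle a_0^+ b_1^-\mid\psi\rangle = \langle a_1^- b_0^+\mid\psi\rangle = \langle a_1^+ b_1^+\mid\psi\rangle = 0$ produces three tangent equations, which I would solve in closed form, giving (up to signs) $\tan\alpha_1 = -\tan\beta_1 = \sqrt{\lambda_0/\lambda_1}$ and $\tan\alpha_0 = -\tan\beta_0 = (\lambda_0/\lambda_1)^{3/2}$. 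This real solution exists and yields two genuinely distinct observables on each side precisely when $\lambda_0 > \lambda_1 > 0$.

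It then remains to verify that the remaining (Hardy) amplitude is nonzero. A direct substitution gives $\langle a_0^+ b_0^+\mid\psi\rangle = \lambda_0\cos^2\beta_0 - \lambda_1\sin^2\beta_0$, whose vanishing would force $\tan^2\beta_0 = \lambda_0/\lambda_1$, whereas our solution satisfies $\tan^2\beta_0 = (\lambda_0/\lambda_1)^3$; these agree only in the excluded maximally-entangled limit. Logical contextuality now follows by the standard Hardy syllogism: possibility of $(+,+)$ in context $\{a_0,b_0\}$ propagates via impossibility of $(+,-)$ in $\{a_0,b_1\}$ to force $b_1 = +$, via impossibility of $(-,+)$ in $\{a_1,b_0\}$ to force $a_1 = +$, and then contradicts the impossibility of $(+,+)$ in $\{a_1,b_1\}$, so no global assignment extends the local support.

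The main obstacle I anticipate is sign bookkeeping. The naively symmetric ansatz $\alpha_i = \beta_i$ reduces the third impossibility constraint to $\tan^2\alpha_1 = -\lambda_0/\lambda_1$, which has no real solution, so one must break the Alice--Bob symmetry, essentially via the "mirror" choice $\alpha_i = -\beta_i$. Once the correct sign configuration is isolated, the rest is a direct verification using the Schmidt coefficients, matching the lemma's requirement that the witnessing observables be computed directly from the Schmidt decomposition.
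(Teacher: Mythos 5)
Your construction is essentially the paper's: both reduce $\ket{\psi}$ to Schmidt form and then write down Hardy's 1993 observables. Your angle solution, with tangent ratios $(\lambda_0/\lambda_1)^{1/2}$ for one observable and $(\lambda_0/\lambda_1)^{3/2}$ for the other on each side, is exactly the paper's pair $\ket{u_i}\propto \beta^{1/2}\ket{+}_i+\alpha^{1/2}\ket{-}_i$ and $\ket{d_i}\propto\beta^{3/2}\ket{+}_i-\alpha^{3/2}\ket{-}_i$ (the half-integer powers of the Schmidt ratio are the signature of Hardy's construction); you derive it from the three vanishing amplitude conditions rather than quoting it, and your observation that the fourth amplitude vanishes only in the maximally entangled limit is the right nondegeneracy check. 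So the positive direction of the lemma --- not in $\PP_2$ implies logically contextual with two observables per party computed from the Schmidt coefficients --- is in order, modulo the sign conventions you already flag.

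The genuine gap is the other direction of the dichotomy. Theorem~\ref{mainth} asserts that \emph{exactly one} of the two cases holds, and the paper's proof of this lemma explicitly treats the disjointness as part of its content: ``we must show that in the maximally entangled case, there is no choice of local observables which can give rise to logical contextuality.'' Your argument only shows that \emph{your particular} observables degenerate when $\lambda_0=\lambda_1$ or $\lambda_1=0$; it does not rule out that some other choice of local observables witnesses logical contextuality of a maximally entangled state. For product states this is easy (the empirical model factorizes for any choice of observables, hence always admits a global section), but for maximally entangled bipartite states it is a nontrivial theorem, which the paper discharges by citation (Theorem 3.5 of \cite{abramsky2014classification} for the case of two observables per site, and Theorem 2.6.5 of \cite{mansfield2013} for arbitrary finite sets of observables). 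You need to either invoke such a result or supply an argument for it; without it the dichotomy is only proved as an inclusive disjunction.
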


The following lemmas and  corollary will be used in the induction step.

\begin{lemma}[\textbf{Going Up Lemma I}]
\label{guI}
If an $n$-qubit state $\ket{\psi}$ is logically contextual with some choice of local observables, then for any $n$-qubit state $\ket{\theta}$, and  $\alpha, \beta \in \Complex$ with $\alpha \neq 0$ and $|\alpha|^2 + |\beta|^2 = 1$, the states
\[ \alpha \ket{\psi} \ket{0} + \beta \ket{\theta} \ket{1}, \qquad \beta \ket{\theta} \ket{0} + \alpha \ket{\psi} \ket{1} \]
are also logically contextual with the same choice of observables, augmented with a single additional observable for the $n+1$-th party.
\end{lemma}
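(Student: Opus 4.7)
The plan is to take the computational basis measurement $Z$ (with eigenvalue $+$ for $\ket{0}$ and $-$ for $\ket{1}$) as the single extra observable at the $(n+1)$-th party. The key Born-rule computation is this: for any context $U = \{(M_1,1),\ldots,(M_n,n)\}$ of the original scenario, letting $\ket{\mu_i}$ denote the eigenvector of $M_i$ for outcome $s(M_i,i)$, the joint outcome $s_+$ extending $s$ by $s_+(Z,n+1)=+$ has probability
$$|\langle \mu_1 \otimes \cdots \otimes \mu_n \otimes 0 \,|\, \alpha \ket{\psi}\ket{0} + \beta \ket{\theta}\ket{1} \rangle|^2 \; = \; |\alpha|^2 \, |\langle \mu_1 \otimes \cdots \otimes \mu_n | \psi \rangle|^2,$$
since the $\ket{1}$-branch is annihilated by $\bra{0}$ on the last factor. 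Because $\alpha \neq 0$, this says that $s_+$ lies in the support of the augmented model exactly when $s$ lies in the support of the original $\ket{\psi}$ model; symmetrically, outcomes carrying $-$ at the last position lie in the support iff the corresponding amplitude of the $\ket{\theta}$-branch is nonzero.

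Next, let $s^{*}$ be a support element of $\ket{\psi}$ at some context $U^{*}$ which witnesses the logical contextuality of $\ket{\psi}$, and extend it to $s^{*}_+$ by assigning $+$ at $(Z,n+1)$. By the computation above, $s^{*}_+$ is in the support of the augmented model. I claim it witnesses logical contextuality of $\alpha\ket{\psi}\ket{0}+\beta\ket{\theta}\ket{1}$. Suppose for contradiction there were a global assignment $g$ on $X \cup \{(Z,n+1)\}$ extending $s^{*}_+$ and compatible with the support of every context. Then $g(Z,n+1) = +$. Since the $(n+1)$-th party has only the one observable $Z$, every context of the augmented scenario has the form $V = U \cup \{(Z,n+1)\}$ for some $U \in \UU$, and the condition that $g\res V$ lies in the support of the augmented model at $V$ reduces, by the key computation, to the condition that $g\res U$ lies in the support of the original $\ket{\psi}$ model at $U$. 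Hence $g$ restricted to $X$ is a global assignment of the original scenario that is compatible with the $\ket{\psi}$ model and extends $s^{*}$, contradicting the logical contextuality of $\ket{\psi}$.

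The case of $\beta\ket{\theta}\ket{0} + \alpha\ket{\psi}\ket{1}$ is completely symmetric: here the $\ket{\psi}$-branch is selected by $\bra{1}$ on the last factor, so we use the same $Z$ observable but extend the witness by $s^{*}_-$, assigning $-$ at $(Z,n+1)$. There is no real obstacle: the entire argument rests on the single observation that adding just one observable at the new party, together with $\alpha \neq 0$, puts the outcomes of the augmented model with the chosen value of $Z$ in bijection (on the support) with the outcomes of the original $\ket{\psi}$ model. The roles of $\ket{\theta}$ and $\beta$ are entirely invisible to this analysis, which is why the lemma holds regardless of what $\ket{\theta}$ is.
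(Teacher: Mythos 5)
Your proposal is correct and follows essentially the same route as the paper: the paper proves Going Up Lemma II in detail via a propositional-formula manipulation and then notes that Lemma I ``follows by a similar argument, where the observables are augmented by the $Z$ measurement for the $n+1$-th party,'' which is exactly your construction. Your phrasing in terms of global assignments rather than unsatisfiable formulas is just the equivalent formulation of the same argument, and your key Born-rule computation (the $\ket{1}$-branch annihilated by $\bra{0}$, with $\alpha\neq 0$ giving the support correspondence for $+$-extended outcomes) matches the paper's use of $\langle\mu|\langle b_+|\cdot|\omega\rangle$.
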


\begin{restatable}[\textbf{Going Up Lemma II}]{lemma}{basecaseII}
\label{lnl}
If $|\theta\rangle=\alpha|\psi\rangle+\beta|\phi\rangle$ is a logically contextual $n$-qubit state under some choice of local observables, then for any  non-zero $x,y \in \Complex$ such that $|x\alpha|^2+|y\beta|^2=1$, the state $|\omega\rangle=x\alpha|\psi\rangle|0\rangle+y\beta|\phi\rangle|1\rangle$ is also logically contextual under the same choice of observables, augmented with a single additional observable for the $n+1$-th party.
\end{restatable}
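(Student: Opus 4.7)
The strategy is to choose a single observable $M_{n+1}$ on the new party so that the partial inner product of $\ket{\omega}$ with the $+$-eigenvector of $M_{n+1}$ on qubit $n+1$ is a nonzero scalar multiple of $\ket{\theta}$. The logical contextuality of $\ket{\omega}$ will then be inherited directly from that of $\ket{\theta}$ on joint contexts whose last component is the outcome $+$.

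Since $x$ and $y$ are nonzero, set
\[ \ket{\mu_{+}} \; := \; c\,\bigl(\overline{1/x}\,\ket{0} + \overline{1/y}\,\ket{1}\bigr), \qquad c := \frac{|x|\,|y|}{\sqrt{|x|^{2} + |y|^{2}}}, \]
which is a unit vector. Let $\ket{\mu_{-}}$ be any unit vector orthogonal to $\ket{\mu_{+}}$, and take $M_{n+1}$ to be the observable with eigenvectors $\ket{\mu_{\pm}}$ for outcomes $\pm$. A direct computation yields $\ip{\mu_{+}}{0} = c/x$ and $\ip{\mu_{+}}{1} = c/y$, whence
\[ (I \otimes \bra{\mu_{+}})\,\ket{\omega} \; = \; c\,(\alpha\ket{\psi} + \beta\ket{\phi}) \; = \; c\,\ket{\theta}. \]
By the Born rule, for every compatible set $V$ of local observables on the first $n$ parties and every joint outcome $t \in \mathbf{2}^{V}$,
\[ e^{\omega}_{V \cup \{(M_{n+1}, n+1)\}}(t, +) \; = \; |c|^{2}\, e^{\theta}_{V}(t), \]
so $(t, +)$ is in the support of the $\ket{\omega}$-model at the augmented context if and only if $t$ is in the support of the $\ket{\theta}$-model at $V$.

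By hypothesis there exist a context $U$ and a support element $s$ of the $\ket{\theta}$-model admitting no consistent extension to a valuation $X \to \mathbf{2}$ respecting every support. Let $s'$ extend $s$ by assigning $+$ to $(M_{n+1}, n+1)$; the support equivalence ensures $s'$ is in the support of the $\ket{\omega}$-model at $U \cup \{(M_{n+1}, n+1)\}$. Suppose for contradiction that some valuation $v : X \cup \{(M_{n+1}, n+1)\} \to \mathbf{2}$ extends $s'$ consistently with every context of $\ket{\omega}$. Then $v(M_{n+1}, n+1) = +$, and for each context $V$ of the original $n$-party scenario, consistency at $V \cup \{(M_{n+1}, n+1)\}$ combined with the support equivalence forces $v|_{V}$ to lie in the support of the $\ket{\theta}$-model at $V$. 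Hence $v|_{X}$ is a consistent extension of $s$ for $\ket{\theta}$, contradicting the hypothesis, and so $\ket{\omega}$ is logically contextual under the augmented observables.

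The main obstacle is the construction of $\ket{\mu_{+}}$ and the verification of the projection identity: the coefficients must be calibrated precisely so that the weights $x$ and $y$ in $\ket{\omega}$ cancel on projection, recovering $\ket{\theta}$ up to a global scalar. Once this identity is in place, the rest of the argument mechanically transfers the non-extendability witness from $\ket{\theta}$ to $\ket{\omega}$, requiring no assumption on the relationship between $\ket{\psi}$ and $\ket{\phi}$ (in particular, no orthogonality or linear independence is needed).
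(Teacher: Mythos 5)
Your proposal is correct and follows essentially the same route as the paper: the paper also assigns the new party a single observable whose $+$-eigenvector (there $\overline{y}\ket{0}+\overline{x}\ket{1}$, giving partial inner product $xy\ket{\theta}$; yours is a normalized reciprocal variant giving $c\ket{\theta}$) makes the support of the augmented model at outcome $+$ coincide with that of $\ket{\theta}$, and then transfers the non-extendable section. The only difference is presentational — you phrase the transfer via valuations where the paper manipulates the propositional formulas $\varphi_{\widetilde{U}}=(\varphi_U\wedge z_{n+1})\vee(\gamma_U\wedge\neg z_{n+1})$ — and both correctly note that the $-$-outcome branch is irrelevant once the valuation is forced to assign $+$ to the new variable.
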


As an easy consequence of these lemmas, we have:
\begin{restatable}{corollary}{corr}
If $\ket{\theta}$ is logically contextual under some choice of local observables, so are $\ket{\theta} \otimes \ket{\eta}$ and $\ket{\eta} \otimes \ket{\theta}$ for any state $\ket{\eta}$, with the same choice of local observables for each party in $\ket{\theta}$, and a single observable for each party in $\ket{\eta}$. 
\end{restatable}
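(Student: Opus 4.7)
My plan is to prove the claim about $\ket{\theta}\otimes\ket{\eta}$ by induction on $m$, the number of qubits of $\ket{\eta}$, adjoining qubits one at a time via Going Up Lemma~I while keeping the observables on the $\ket{\theta}$ parties fixed. The base case $m=0$ is immediate: $\ket{\theta}\otimes\ket{\eta}=\ket{\theta}$, contextual by assumption. Note that Going Up Lemma~II is not required for this corollary --- Going Up Lemma~I alone suffices, because the freedom to pick the lemma's ``$\ket{\theta}$'' parameter arbitrarily is exactly what lets me split off the next qubit of $\ket{\eta}$.

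For the inductive step I decompose the last qubit of $\ket{\eta}$ in the computational basis,
$$\ket{\eta} \; = \; \alpha\,\ket{\eta_0}\ket{0} \; + \; \beta\,\ket{\eta_1}\ket{1},$$
with $\alpha,\beta\geq 0$, $\alpha^2+\beta^2=1$, and $\ket{\eta_0},\ket{\eta_1}$ normalised $m$-qubit states whenever their coefficients are nonzero. In the generic case $\alpha\neq 0$, the induction hypothesis yields that $\ket{\theta}\otimes\ket{\eta_0}$ is logically contextual, so I feed it into Going Up Lemma~I with $\ket{\psi}:=\ket{\theta}\otimes\ket{\eta_0}$ and the lemma's ``$\ket{\theta}$'' parameter set to $\ket{\theta}\otimes\ket{\eta_1}$ (arbitrary if $\beta=0$); the conclusion is that
$$\alpha\,(\ket{\theta}\ket{\eta_0})\ket{0} \; + \; \beta\,(\ket{\theta}\ket{\eta_1})\ket{1} \; = \; \ket{\theta}\otimes\ket{\eta}$$
is logically contextual, with a single new observable adjoined for the freshly added party. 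The boundary case $\alpha=0$ forces $\beta=1$ and $\ket{\eta}=\ket{\eta_1}\ket{1}$; here I use the second form of Going Up Lemma~I on the (inductively contextual) state $\ket{\theta}\otimes\ket{\eta_1}$ with $\alpha_{\text{lemma}}=1,\beta_{\text{lemma}}=0$, recovering $\ket{\theta}\otimes\ket{\eta}$. Since each step introduces exactly one new observable for exactly one new party, the final tally is the announced one: the original observables on the $\ket{\theta}$ parties and one observable per party of $\ket{\eta}$.

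For the companion statement $\ket{\eta}\otimes\ket{\theta}$, the cleanest route is to invoke permutation-invariance of logical contextuality: relabelling party indices is a symmetry of the definition of empirical model given in Section~2 (measurement labels, supports, and the formula $\Phi$ are all permuted uniformly), so the result just proved transports to the reversed tensor ordering without further argument. I do not anticipate serious obstacles in this proof; the only piece of care required is tracking the side condition $\alpha\neq 0$ in Going Up Lemma~I, which is exactly what forces the two-case split in the inductive step.
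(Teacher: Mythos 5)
Your proposal is correct and follows essentially the same route as the paper: induction on the number of qubits of $\ket{\eta}$, peeling off the last qubit by bilinearity and applying Going Up Lemma~I at each step, with permutation invariance handling $\ket{\eta}\otimes\ket{\theta}$. Your explicit case split on $\alpha=0$ is a small refinement of the paper's argument (which silently assumes the nonzero-coefficient hypothesis of Lemma~\ref{guI} can be arranged), but it is not a different approach.
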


We now consider the induction step where we have an $n+1$-qubit state $\ket{\omega}$, $n>1$. We can write
\begin{equation}
\label{omeq}
\ket{\omega} \; = \; \alpha \ket{\psi} \ket{0} + \beta \ket{\phi} \ket{1} .
\end{equation}
By the Going Up Lemma I, if either $\ket{\psi}$ or $\ket{\phi}$ are logically contextual, so is $\ket{\omega}$, and we are done.

Suppose now that $\ket{\psi}$ and $\ket{\phi}$ are both in $\Pn$. We consider the parameterised family of states
\begin{equation}
\label{taudef}
\tau(a) \; = \; a \ket{\psi} + \sqrt{1 - a^2} \ket{\phi}, \qquad a \in [0, 1] . 
\end{equation}
If for some $a$, $\tau(a)$ is logically contextual, so is $\ket{\omega}$, by the Going Up Lemma II.
For the remaining case, we have the following rather remarkable result.

\begin{restatable}[\textbf{The Small Difference Lemma}]{lemma}{smalldiff}
\label{smalldifflemm}
Let $\ket{\psi}$ and $\ket{\phi}$ be states  in $\Pn$, and suppose that for all $a \in [0, 1]$, $\tau(a)$ is in $\Pn$, where $\tau(a)$ is defined by~(\ref{taudef}).
Then $\ket{\psi}$ and $\ket{\phi}$ differ in at most one qubit.
\end{restatable}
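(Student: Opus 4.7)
The approach is to exploit the rigidity of $\Pn$ under continuous deformation. States in $\Pn$ have highly constrained reduced density matrices: one-qubit marginals are either pure or maximally mixed, and two-qubit marginals admit only a handful of possible spectra. Since $\tau(a)$ is required to lie in $\Pn$ for every $a\in[0,1]$, continuous invariants of these marginals must be constant in $a$, and their discreteness then forces $\ket{\psi}$ and $\ket{\phi}$ to share a common factorization structure.

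First, for any $\chi\in\Pn$ and any qubit $i$, the determinant $\det(\rho_i^\chi)$ lies in the discrete set $\{0,\,1/4\}$: it is $0$ precisely when qubit $i$ is a $1$-qubit factor of $\chi$ (so $\rho_i^\chi$ is pure) and $1/4$ precisely when qubit $i$ belongs to a maximally entangled pair (so $\rho_i^\chi=\tfrac{1}{2}\mathbb{I}$). The map $a\mapsto\det(\rho_i^{\tau(a)})$ is continuous, so since its image lies in a discrete set and $[0,1]$ is connected it must be constant, forcing qubit $i$ to have the same type in $\ket{\psi}$ and $\ket{\phi}$. Similarly, for $\chi\in\Pn$ the two-qubit purity $\operatorname{tr}\bigl((\rho_{ij}^\chi)^2\bigr)$ takes only the values $1$, $1/2$, $1/4$; the same continuity/connectedness argument applied to $a\mapsto\operatorname{tr}\bigl((\rho_{ij}^{\tau(a)})^2\bigr)$ shows it is constant, and this pins down, for any two qubits $i,j$ both belonging to maximally entangled pairs, whether $\{i,j\}$ is itself the pair or whether $i$ and $j$ lie in distinct pairs. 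Consequently $\ket{\psi}$ and $\ket{\phi}$ admit common factorizations $\ket{\psi}=\bigotimes_k\ket{\psi_k}$ and $\ket{\phi}=\bigotimes_k\ket{\phi_k}$ using the same partition of qubits into $1$-qubit and $2$-qubit maximally entangled factors.

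Finally, set $D:=\{k:\ket{\psi_k}\not\propto\ket{\phi_k}\}$ and suppose for contradiction that $k_1\neq k_2$ both lie in $D$. Consider the bipartition $I_{k_1}\mid\overline{I_{k_1}}$, where $I_{k_1}$ consists of the qubits of factor $k_1$. Since each of $\ket{\psi},\ket{\phi}$ is a product across this bipartition, $\tau(a)$ is a sum of two rank-$1$ bipartite tensors, and its Schmidt rank equals $2$ iff both the ``left'' pair $(\ket{\psi_{k_1}},\ket{\phi_{k_1}})$ and the ``right'' pair $(\ket{\psi_{\bar k_1}},\ket{\phi_{\bar k_1}})$ are linearly independent. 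The former holds by $k_1\in D$; for the latter, uniqueness of tensor-product factorizations shows that $\ket{\psi_{\bar k_1}}\propto\ket{\phi_{\bar k_1}}$ would force $\ket{\psi_k}\propto\ket{\phi_k}$ for every $k\neq k_1$, contradicting $k_2\in D$. So the Schmidt rank is $2$ for $a\in(0,1)$ --- but by the previous two steps $I_{k_1}$ is a tensor factor of $\tau(a)$, forcing Schmidt rank $1$. This contradiction yields $|D|\leq 1$; and if $|D|=1$ then the single differing factor is either a $1$-qubit factor (so the two $1$-qubit states are related by a $1$-qubit unitary) or a $2$-qubit maximally entangled pair (where any two $2$-qubit maximally entangled states are related by a local unitary of the form $I\otimes U$), so in either case $\ket{\psi}$ and $\ket{\phi}$ differ in at most one qubit. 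The main technical step is the Schmidt-rank contradiction, particularly the factorization-uniqueness argument establishing that the $\bar k_1$-side factors $\ket{\psi_{\bar k_1}},\ket{\phi_{\bar k_1}}$ remain linearly independent; normalization of $\tau(a)$ is harmless on $[0,1]$ as long as $\ket{\psi}\not\propto\ket{\phi}$, and if they are proportional the lemma is trivial.
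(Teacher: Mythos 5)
Your proof takes a genuinely different, and in its first two steps arguably cleaner, route than the paper's. Where the paper pins down the entanglement type of $\tau(a)$ only along a sequence $a_i \to 1$ --- via K\"onig's infinity lemma, compactness of $\Un(2)^n$ and a limiting argument --- and then extracts the common factors by repeated partial inner products, you fix the type for \emph{all} $a$ at once by noting that $\det \rho_i^{\tau(a)}$ and $\Tr\bigl((\rho_{ij}^{\tau(a)})^2\bigr)$ are continuous functions on the connected set $[0,1]$ with values in finite sets, hence constant. Your Schmidt-rank argument across the bipartition isolating a differing factor then replaces the paper's partial-inner-product computation; the factorization-uniqueness point you single out is handled correctly, as is the normalization caveat. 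These two steps are sound and give an elementary proof that $\ket{\psi}$ and $\ket{\phi}$ differ in at most one \emph{factor} of their common $\Pn$-decomposition.

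The gap is in your final step. The paper's proof does not stop at $|D|\le 1$: it goes on to argue that the single differing component \emph{cannot} be a two-qubit maximally entangled pair, so that the conclusion is a common factor $\ket{\Psi}\in\PP_{n-1}$ together with differing \emph{one-qubit} states $\ket{\eta},\ket{\theta}$ --- which is exactly the form consumed in Section~3, where $\ket{\omega}=\ket{\Psi}\otimes\ket{\xi}$ with $\ket{\xi}$ a two-qubit state is handed to the Base Case Lemma. You instead retain the maximally-entangled case and declare it to ``differ in one qubit'' because the two Bell states are related by $I\otimes U$; that is a strictly weaker conclusion which does not yield the product decomposition the rest of the argument needs. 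To close the gap you would have to rule this case out, i.e.\ show that for non-proportional maximally entangled $\ket{\psi_{k_1}},\ket{\phi_{k_1}}$ some combination $a\ket{\psi_{k_1}}+\sqrt{1-a^2}\,\ket{\phi_{k_1}}$ fails to be proportional to a state in $\PP_2$. Be aware that this is delicate: for $\ket{\psi_{k_1}}=\tfrac{1}{\sqrt2}(\ket{00}+\ket{11})$ and $\ket{\phi_{k_1}}=\tfrac{1}{\sqrt2}(e^{i\theta}\ket{00}+e^{-i\theta}\ket{11})$ the two coefficients of every real combination have equal modulus, so every such combination remains maximally entangled although the states are not proportional. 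Thus the case you tried to wave through is precisely the one requiring a genuine argument (and the paper's own one-sentence dismissal of it is also too quick); as written, your proposal does not resolve it.
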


Applying this result to our decomposition~(\ref{omeq}) of $\ket{\omega}$, we have the following possibilities:
\begin{itemize}
\item $\ket{\psi} = \ket{\phi}$, in which case, from~(\ref{omeq}) and the bilinearity of the tensor product:
\[ \ket{\omega} \; = \; \alpha \ket{\psi} \ket{0} + \beta \ket{\psi} \ket{1} \; = \; \ket{\psi} \otimes (\alpha \ket{0} + \beta \ket{1}) . \]
Since by assumption $\ket{\psi}$ is in $\Pn$, $\ket{\omega}$ is in $\PP_{n+1}$.

\item $\ket{\psi} \neq \ket{\phi}$, in which case (up to permutation) we can write
\[ \begin{array}{lcl}
\ket{\psi} & = & \ket{\Psi} \otimes \ket{\eta} \\
\ket{\phi} & = & \ket{\Psi} \otimes \ket{\theta}
\end{array}
\]
where $\ket{\Psi}$ is in $\PP_{n-1}$ and $\ket{\eta}$ and $\ket{\theta}$ are 1-qubit states.
From this and~(\ref{omeq}), using the bilinearity of tensor product again we have
\[ \ket{\omega} = \ket{\Psi} \otimes \ket{\xi} \]
where $\ket{\xi}$ is a 2-qubit state.
We can apply the Base Case Lemma to $\ket{\xi}$ to conclude that $\ket{\xi}$ is either in $\PP_2$, in which case $\ket{\omega}$ is in $\PP_{n+1}$, or $\ket{\xi}$ is logically contextual, in which case $\ket{\omega}$ is logically contextual by the corollary to the Going Up Lemma.
\end{itemize}

\noindent At this point, we have established (1) and (2) of Theorem~\ref{mainth}, but it seems that we require an infinite search to determine if there exists some $a \in [0, 1]$ for which $\tau(a)$ is logically contextual.

However, the following lemma shows that we only need to test a fixed, finite number of values for $a$ to determine this.

\begin{restatable}[\textbf{The 21 Lemma}]{lemma}{lemmtwoone}
\label{21lemm}
With the same notation as in the Small Difference Lemma, suppose that $\tau(a)$ is in $\Pn$ for 21 distinct values of $a$ in $[0, 1]$. Then $\tau(a)$ is in $\Pn$ for all $a \in [0, 1]$.
\end{restatable}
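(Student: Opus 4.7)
The plan is to reduce the condition ``$\tau(a) \in \Pn$'' to a finite list of single-variable polynomial identities in $a$ of degree at most $20$, so that $21$ distinct zeros force each identity to vanish identically on $[0,1]$. This pins down a single partition $\pi$ of $\{1,\dots,n\}$ into singletons and pairs that witnesses $\Pn$-membership for every $a$.

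\textbf{Step 1 (singleton/pair dichotomy).} For each qubit $k$, let $R_k(a) := \rho_k^{\tau(a)}$ be the reduced density matrix on qubit $k$. Its entries are sesquilinear in the coefficients $\tau_I(a) = a\psi_I + \sqrt{1-a^2}\,\phi_I$, hence polynomials of degree $2$ in $(a, \sqrt{1-a^2})$, so $\det R_k(a)$ has degree $4$. Whenever $\tau(a) \in \Pn$, $R_k(a)$ is either pure ($\det R_k = 0$) or maximally mixed ($\det R_k = 1/4$), so $f_k(a) := \det R_k(a)\,(\det R_k(a) - \tfrac{1}{4})$ vanishes at each of the $21$ sample values. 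Writing $f_k(a) = P(a) + \sqrt{1-a^2}\,Q(a)$ and multiplying by $P(a) - \sqrt{1-a^2}\,Q(a)$ yields a polynomial identity $P^2 - (1-a^2)Q^2 = 0$ in $a$ of degree at most $20$; $21$ distinct zeros force identical vanishing, and since $1 - a^2$ has two distinct simple roots it is not a perfect square, so $P \equiv Q \equiv 0$ and $f_k \equiv 0$ on $[0,1]$. Continuity of $\det R_k$ together with the discreteness of $\{0, \tfrac{1}{4}\}$ force $\det R_k$ to be constant. This defines $S := \{k : \det R_k \equiv 0\}$ (forced singletons) and $P := \{k : \det R_k \equiv 1/4\}$ (forced paired qubits).

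\textbf{Step 2 (pair matching).} For each $\{i,j\} \subseteq P$, set $\nu_{ij}(a) := \mathrm{tr}\bigl((\rho_{ij}^{\tau(a)})^2\bigr)$, a polynomial of degree $4$ in $(a, \sqrt{1-a^2})$. If $\tau(a) \in \Pn$ with pair matching $M$ on $P$, then $\nu_{ij}(a) = 1$ when $\{i,j\} \in M$ (the reduced pair state is pure, maximally entangled), while $\nu_{ij}(a) = 1/4$ when $\{i,j\} \notin M$ but $i, j \in P$ (the reduced state is $(I/2) \otimes (I/2) = I/4$, so $\mathrm{tr}((I/4)^2) = 1/4$). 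Applying the same degree-rationalization argument to $(\nu_{ij}(a) - 1)(\nu_{ij}(a) - \tfrac{1}{4})$ gives a polynomial in $a$ of degree at most $20$ vanishing at the $21$ sample points, hence identically; by continuity each $\nu_{ij}$ is constantly $1$ or constantly $1/4$.

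\textbf{Step 3 (consistency and conclusion).} Let $M := \{\{i,j\} \subseteq P : \nu_{ij} \equiv 1\}$. $M$ is a matching: if distinct pairs $\{i,j\}, \{i,k\}$ both satisfied $\nu \equiv 1$, then at any sample $a$ with $\tau(a) \in \Pn$, the witnessing partition pairs $i$ with at most one of $j, k$; the other pair then has $\nu = 1/4$ at that $a$, contradicting $\nu \equiv 1$. Moreover $M$ covers $P$, since each $k \in P$ must be paired at every sample. Combining Steps 1 and 2: for every $a \in [0,1]$, $R_k(a)$ is pure for $k \in S$, $R_k(a) = I/2$ for $k \in P$, and $\rho_{ij}^{\tau(a)}$ is pure for every $\{i,j\} \in M$. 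Purity of every block reduced density matrix implies $\tau(a)$ factorizes along $\pi := S \sqcup M$ (by iterated bipartite purity), and the $R_k = I/2$ conditions on pair blocks give maximal entanglement. Thus $\tau(a) \in \Pn$ for every $a \in [0,1]$. The main obstacle is the precise degree bookkeeping through rationalization that yields the bound $20$ (so that $21$ distinct zeros suffice), together with the matching-consistency argument, which relies on the observation that the ``always-$1$'' pairs must exhaust $P$ by the $\Pn$-pairing structure at any single sample.
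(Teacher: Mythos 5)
Your proof is correct, and it runs on the same engine as the paper's: whenever $\tau(a)\in\Pn$, each reduced density matrix must be pure or maximally mixed, its entries are quadratic in $(a,\sqrt{1-a^2})$, and rationalizing the square root converts each such constraint into a univariate polynomial of bounded degree, so that 21 distinct roots force it to vanish identically. The difference lies in how much of the structure of $\Pn$ is tracked. The paper examines only the single-qubit marginal $\rho_n$, counts the two alternatives separately (the maximally-mixed condition gives quartics, hence at most 4 roots; $\Tr\rho_n^2=1$ gives degree 16, hence at most 16 roots, whence $4+16=20$), and then passes from ``one of these equations is degenerate'' to ``$\tau(a)\in\Pn$ for all $a$'' without spelling out the reconstruction --- note that having every one-qubit marginal pure or maximally mixed for all $a$ does not by itself place a state in $\Pn$ (the GHZ state is a counterexample), so some version of your Steps 2--3 is genuinely needed. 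You instead fold the two alternatives into a single product polynomial per marginal (degree at most 16 after rationalization, so 21 points is more than enough), run the root-counting over every one-qubit \emph{and} every two-qubit marginal, use continuity to force each $\det R_k$ and each $\Tr\rho_{ij}^2$ to be constant on $[0,1]$, and then reassemble the partition into singletons and maximally entangled pairs via the matching argument. This is a more complete, self-contained derivation of the conclusion than the one printed in the paper, at the cost of some extra combinatorial bookkeeping; the degree accounting in both versions comfortably fits under the stated bound of 21 sample values.
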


Thus this lemma allows us to determine which case applies on the basis of a finite number of tests.

In the next section, we shall give proofs of these lemmas. We shall then give an explicit algorithm in Section~5 to complete the proof of Theorem~\ref{mainth}.

\section{Proofs of the lemmas}

Firstly, we collect a few useful basic properties.

\subsection{Background lemmas}

We consider  relations $\sim \; = \; \{ \sim_n \}_{n \in \Nat}$, where $\sim_n$ is an equivalence relation on $n$-qubit states.
We say that $\sim$ is  \emph{LC invariant} if for all $n$-qubit states $\ket{\psi}$, $\ket{\phi}$:
\begin{itemize}
\item If $\ket{\psi} \sim_n \ket{\phi}$, then $\ket{\psi} \in \Pn$ iff $\ket{\phi} \in \Pn$.
\item If $\ket{\psi} \sim_n \ket{\phi}$, then $\ket{\psi}$ is logically contextual iff  $\ket{\phi}$ is logically contextual.
\end{itemize}

\begin{lemma}
\label{permlemm}
The relation induced by permutation of tensor factors is  LC invariant.
\end{lemma}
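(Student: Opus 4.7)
The plan is to verify each of the two clauses in the definition of LC invariance directly from the relevant definitions, since permutation of tensor factors is essentially a relabeling of parties.

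For the first clause, I would observe that the set $\Pn$ is defined in the statement of Theorem~\ref{mainth} as those $n$-qubit pure states which, \emph{up to permutation of tensor factors}, are products of 1-qubit states and 2-qubit maximally entangled states. Hence $\Pn$ is by construction closed under arbitrary permutations of tensor factors: if $\ket{\phi}$ arises from $\ket{\psi}$ by a permutation $\sigma$, and $\ket{\psi}$ admits such a product form after some permutation $\pi$, then $\ket{\phi}$ admits the same product form after $\pi \circ \sigma^{-1}$. The converse is symmetric.

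For the second clause, I would show that permutation of tensor factors induces a bijective correspondence between empirical models that preserves all joint-outcome probabilities. Concretely, suppose $\ket{\phi} = \sigma \cdot \ket{\psi}$ and that $\ket{\psi}$ is logically contextual under a choice of local observables $M_1, \ldots, M_n$ assigned to parties $1, \ldots, n$ (more generally, a collection of observables at each party); then I assign to party $i$ in $\ket{\phi}$ the observables that were originally assigned to party $\sigma^{-1}(i)$ in $\ket{\psi}$. The resulting empirical model has the same Born probabilities as the original, by the elementary identity
\[ \bigl(\bra{\mu_1}\otimes\cdots\otimes\bra{\mu_n}\bigr)\bigl(\sigma\cdot\ket{\psi}\bigr) \; = \; \bigl(\bra{\mu_{\sigma^{-1}(1)}}\otimes\cdots\otimes\bra{\mu_{\sigma^{-1}(n)}}\bigr)\ket{\psi}, \]
i.e.\ the invariance of the inner product under a simultaneous permutation of the tensor factors on both sides. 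Consequently the supports $S(U)$ agree after the evident renaming of measurement labels $(M, i) \mapsto (M, \sigma(i))$, so the propositional formulas $\varphi_s$, $\varphi_U$, and $\Phi$ of Section~2 are identical up to renaming of boolean variables, and satisfiability is preserved.

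There is no real obstacle here; the lemma is pure bookkeeping and does not require any of the deeper results in the paper. The only subtlety worth stating explicitly is that the permutation must act simultaneously on the state and on the assignment of observables to parties, so that each observable continues to act on the qubit it was originally intended to measure.
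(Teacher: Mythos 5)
Your proof is correct, and it fills in exactly the routine bookkeeping that the paper omits: Lemma~\ref{permlemm} is stated there without proof, precisely because the argument is the one you give (closure of $\Pn$ under permutation by definition, and invariance of the empirical model --- hence of the formulas $\varphi_s$, $\varphi_U$, $\Phi$ up to renaming of variables --- when the permutation acts simultaneously on the state and on the assignment of observables to parties). No further comment is needed.
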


\begin{lemma}
The relation of LU equivalence is  LC invariant. Here LU equivalence refers to the relation induced by the action of local (1-qubit) unitaries.
\end{lemma}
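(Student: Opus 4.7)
The plan is to verify both conditions in the definition of LC invariance for the relation induced by local unitaries, i.e.\ when $\ket{\phi} = (U_1 \otimes \cdots \otimes U_n)\ket{\psi}$ for some 1-qubit unitaries $U_1,\ldots,U_n$. Since this relation is symmetric (local unitaries are invertible and the inverses are again local unitaries), in each case it suffices to prove one direction.

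For the first condition ($\PP_n$-preservation), the key observation is that the two classes of factors appearing in the product form~(\ref{prodform}) are themselves stable under local unitaries: a 1-qubit state $\ket{\psi_i}$ is sent by the corresponding $U_i$ to another 1-qubit state, and a 2-qubit maximally entangled state is sent by a $U_i \otimes U_j$ to another maximally entangled state. The latter follows from the Schmidt decomposition: maximal entanglement is characterised by having Schmidt coefficients $(1/\sqrt{2}, 1/\sqrt{2})$, and the Schmidt spectrum is invariant under local unitaries (applying $U_i$ and $U_j$ merely rotates the two Schmidt bases into other orthonormal bases). Partitioning the unitaries $\{U_k\}$ according to which tensor factor of the decomposition~(\ref{prodform}) the $k$-th qubit belongs to, one sees that $(U_1 \otimes \cdots \otimes U_n)\ket{\psi}$ is a product of the same shape, hence lies in $\PP_n$.

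For the second condition (preservation of logical contextuality), suppose $\ket{\psi}$ is logically contextual with respect to a family of 1-qubit observables $M_i^{(j)}$ at site $i$. Define the transformed observables
\[ N_i^{(j)} \; := \; U_i\, M_i^{(j)}\, U_i^{\dagger} . \]
These are again self-adjoint with the same eigenvalues, and their eigenvectors are precisely $U_i \ket{\mu}$ whenever $\ket{\mu}$ is an eigenvector of $M_i^{(j)}$. A direct computation using the Born rule shows that the empirical models agree: for any context $U$ and joint outcome $s$, unitarity lets us absorb each $U_i^{\dagger}$ from the bras into the ket $\ket{\phi}$, producing
\[ e_U^{\phi, N}(s) \; = \; \bigl|\bra{\mu_1}\otimes \cdots \otimes \bra{\mu_n}\, U_1^{\dagger}\otimes \cdots \otimes U_n^{\dagger}\ket{\phi}\bigr|^2 \; = \; e_U^{\psi, M}(s). \]
Since logical contextuality is defined purely from the empirical model (indeed, only from its support), $\ket{\phi}$ is logically contextual with respect to $\{N_i^{(j)}\}$.

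I do not foresee a serious obstacle. The only subtlety worth being careful about is the bookkeeping in the first condition: the unitaries act qubit-by-qubit, whereas the product decomposition~(\ref{prodform}) groups qubits into blocks of size 1 or 2, so one must note that the local unitaries respect any such grouping (each block is acted on by the tensor product of the unitaries at its constituent sites). Both parts are then essentially immediate from standard properties of the Born rule and the Schmidt decomposition.
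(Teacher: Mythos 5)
Your proof is correct; the paper in fact states this lemma without proof, and your argument (invariance of the Schmidt spectrum for the $\PP_n$ condition, and conjugating the witnessing observables by the local unitaries so that the Born-rule probabilities, hence the supports, are unchanged) is precisely the standard argument the authors evidently had in mind.
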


The following result will be used in the proof of the Small Difference lemma.
It refers to the ``partial inner product'' operation described e.g.~in \cite[p.~129]{QPSI}.\footnote{This is actually  the application of a linear map to a vector under Map-State duality \cite{abramsky2008categorical}.}

\begin{lemma}
\label{tensorthlemm}
Let $\ket{\phi}$ be a state in $\HH \otimes \KK$. 
For any states $\ket{\eta}$ in $\HH$ and $\ket{\theta}$ in $\KK$, if for all $\ket{\eta^{\bot}}$ orthogonal to $\ket{\eta}$,
$\langle \eta^{\bot} | \phi \rangle = \mathbf{0}$,
and  for all $\ket{\theta^{\bot}}$ orthogonal to $\ket{\theta}$,
$\langle \theta^{\bot} | \phi \rangle = \mathbf{0}$,
then (up to global phase) $\ket{\phi}  = \ket{\eta} \otimes \ket{\theta}$.
\end{lemma}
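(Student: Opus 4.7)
The plan is a direct basis-expansion argument exploiting the definition of the partial inner product: for $\ket{e} \in \HH$, $\langle e | \phi \rangle$ is the unique vector in $\KK$ such that $\langle \zeta | \langle e | \phi \rangle \rangle = \langle e \otimes \zeta | \phi \rangle$ for all $\ket{\zeta} \in \KK$. Concretely, if we pick an orthonormal basis $\{\ket{f_j}\}$ of $\KK$ and write $\ket{\phi} = \sum_{e, j} c_{e,j} \ket{e} \otimes \ket{f_j}$ in an orthonormal basis $\{\ket{e}\}$ of $\HH$, then $\langle e | \phi \rangle = \sum_j c_{e,j} \ket{f_j}$.

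First I would extend $\ket{\eta}$ to an orthonormal basis $\ket{e_1} = \ket{\eta}, \ket{e_2}, \ldots, \ket{e_m}$ of $\HH$ and extend $\ket{\theta}$ to an orthonormal basis $\ket{f_1} = \ket{\theta}, \ket{f_2}, \ldots, \ket{f_k}$ of $\KK$, and expand
\[ \ket{\phi} \; = \; \sum_{i, j} c_{ij} \, \ket{e_i} \otimes \ket{f_j}. \]
For each $i \geq 2$, $\ket{e_i}$ is orthogonal to $\ket{\eta}$, so the hypothesis gives $\langle e_i | \phi \rangle = \mathbf{0}$ in $\KK$; expanding this partial inner product in the basis $\{\ket{f_j}\}$ forces $c_{ij} = 0$ for all $j$. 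Similarly, for each $j \geq 2$, $\ket{f_j}$ is orthogonal to $\ket{\theta}$, so $\langle f_j | \phi \rangle = \mathbf{0}$ in $\HH$ (using the symmetric form of the partial inner product on the second factor), forcing $c_{ij} = 0$ for all $i$.

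Combining these two sets of vanishing conditions, only $c_{11}$ can be nonzero, so
\[ \ket{\phi} \; = \; c_{11} \, \ket{\eta} \otimes \ket{\theta}. \]
Since $\ket{\phi}$, $\ket{\eta}$, $\ket{\theta}$ are unit vectors we have $|c_{11}| = 1$, which yields the claim up to a global phase. There is no real obstacle here: everything reduces to computing partial inner products against basis vectors and reading off that the only surviving coefficient is the $(1,1)$-entry. The only minor point to keep straight is the symmetry of the argument: the hypothesis is stated via partial inner products from the first tensor factor, but the same computation applies on the second factor by symmetry of the tensor product, which is why both the $i \geq 2$ rows and the $j \geq 2$ columns can be killed.
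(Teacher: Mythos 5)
Your proof is correct and follows essentially the same route as the paper's: both extend $\ket{\eta}$ and $\ket{\theta}$ to orthonormal bases of $\HH$ and $\KK$ and use the defining property of the partial inner product to show that every component of $\ket{\phi}$ along a product basis vector other than $\ket{\eta}\otimes\ket{\theta}$ vanishes. The only difference is presentational --- you read off the coefficients $c_{ij}$ directly, whereas the paper phrases the same conclusion in terms of orthogonal complements ($\ket{\phi}^{\bot\bot} \subseteq S^{\bot} = (\ket{\eta}\otimes\ket{\theta})^{\bot\bot}$) --- and your explicit normalization step $|c_{11}|=1$ is a slightly cleaner way to land the ``up to global phase'' claim.
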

\begin{proof}
We extend $\ket{\eta}$ into an orthonormal basis $\ket{\eta_1}, \ldots , \ket{\eta_n}$ with $\ket{\eta} = \ket{\eta_1}$, and similarly extend $\ket{\theta}$ into $\ket{\theta_1}, \ldots , \ket{\theta_m}$ with $\ket{\theta} = \ket{\theta_1}$. 
Then $B = \{ \ket{\eta_i} \otimes \ket{\theta_j} \}_{i,j}$ forms an orthonormal basis of $\HH \otimes \KK$. Note that
\[ \HH \otimes \KK \; = \; (\ket{\eta} \otimes \ket{\theta})^{\bot \bot} \oplus S^{\bot \bot} \]
where $S = \{ \ket{\eta_i} \otimes \ket{\theta_j} \mid (i,j) \neq (1,1) \}$. Hence $S^{\bot \bot} = (\ket{\eta} \otimes \ket{\theta})^{\bot}$.
By our assumption and the defining property of the partial inner product \cite[Equation (6.47)]{QPSI}, $S \subseteq \ket{\phi}^{\bot}$. Hence
\[ \ket{\phi}^{\bot\bot} \; \subseteq \; S^{\bot} = S^{\bot \bot \bot} = (\ket{\eta} \otimes \ket{\theta})^{\bot \bot} . 
\]
Since these are one-dimensional subspaces, this implies that, up to global phase, $\ket{\phi}  = \ket{\eta} \otimes \ket{\theta}$.
\end{proof}

We now turn to detailed proofs of the main lemmas.
For convenience, we shall repeat the statements of the lemmas.

\subsection{The Base Case lemma}

\basecase*

\begin{proof}
This is essentially Hardy's construction in \cite{hardy1993nonlocality}.
Using the  Schmidt decomposition, every two-particle entangled state can be written in the form
$$|\psi\rangle=\alpha|+\rangle_1|+\rangle_2 + \beta|-\rangle_1|-\rangle_2$$
for an appropriate choice of basis states $|\pm\rangle_i$ for each particle $i$, and normalized non-zero real constants $\alpha$ and $\beta$. 

The logical contextuality of $|\psi\rangle$ is witnessed by a set of four dichotomic observables, two for each of the two parties, namely $U_i$ and $D_i$, $i=1,2$. These observables can be defined as $U_i=|u_i\rangle\langle u_i|$ and $D_i=|d_i\rangle\langle d_i|$ where
\begin{align}
 |u_i\rangle&=\frac{1}{\sqrt{|\alpha|+|\beta|}} (\beta^{\frac{1}{2}}|+\rangle_i + \alpha^{\frac{1}{2}}|-\rangle_i)\\
 |d_i\rangle&=\frac{1}{\sqrt{|\alpha|^3+|\beta|^3}} (\beta^{\frac{3}{2}}|+\rangle_i - \alpha^{\frac{3}{2}}|-\rangle_i)
\end{align}
Hardy's paper also explains why it is not possible to run this particular non-locality argument when either $\alpha$ or $\beta$ are equal to zero (product states), or when $|\alpha| = |\beta|$ (maximally entangled states), that is, when the state $|\psi\rangle$ belongs to $\mathcal{P}_2$. 

There is one subtle remaining point. To show that the dichotomy asserted in the lemma is strictly disjoint, we must show that in the maximally entangled case, there is \emph{no} choice of local observables which can give rise to logical contextuality.
This is shown for the case where each party has the same two local observables as Theorem 3.5 in \cite{abramsky2014classification}, and more generally for any finite sets of local observables as Theorem 2.6.5 in \cite{mansfield2013}.
\end{proof}

\subsection{The Going Up lemmas}

The proofs of the two Going Up lemmas are quite similar. We shall prove the second, which is somewhat harder.

\basecaseII*

\begin{proof}
Since $|\theta\rangle$ is logically contextual there must be some context $U'$ and some $s'\in S(U')$ such that the formula $\Psi=\varphi_{s'}\wedge\bigwedge_{U\in\mathcal{U}\backslash U'} \varphi_U$ is not satisfiable. We will show that it is possible to construct a similar unsatisfiable formula in order to prove the logical non-locality of $|\omega\rangle$.

The $n+1$-th party is assigned a single observable $B = B(x,y)$, whose eigenvectors are $|b_+\rangle=\overline{y}|0\rangle+\overline{x}|1\rangle$ and $|b_-\rangle=x|0\rangle-y|1\rangle$. The observable $B$ is given by the self-adjoint matrix
\begin{equation}\label{Bobservable}
B(x,y)= \left( \begin{array}{cc}
             -|x|^2+|y|^2 & 2x\overline{y} \\ 2\overline{x}y & |x|^2-|y|^2)
            \end{array}\right)  
\end{equation}
For any $n$-qubit state $|\mu\rangle$ we have
\begin{align*}
\langle\mu|\langle b_+|\cdot|\omega\rangle &= x\alpha\langle\mu|\psi\rangle\langle b_+|0\rangle+ y\beta\langle\mu|\phi\rangle\langle b_+|1\rangle\\
&=xy\alpha\langle\mu|\psi\rangle + xy\beta\langle\mu|\phi\rangle\\
&=xy(\alpha\langle\mu|\psi\rangle+\beta\langle\mu|\phi\rangle)\\
&=xy\langle\mu|\theta\rangle
\end{align*}
which implies
\begin{equation}\label{theta1}
\langle\mu|\theta\rangle|^2=0 \Leftrightarrow |\langle\mu|\langle b_+|\cdot|\omega\rangle|^2=0
\end{equation}
The augmented set of allowed measurements $\widetilde{X}=X\cup{(B,n+1)}$ is then covered by the family of compatible subsets 
\[ \widetilde{\mathcal{U}}:=\{\widetilde{U}=U\cup{(B,n+1)}~|~U\in \mathcal{U}\} . \]

Let $T(\widetilde{U})$ denote the support of $\widetilde{U}\in\widetilde{\mathcal{U}}$. Equation~(\ref{theta1}) implies that $S(U)=\{s~|~s+\in T(\widetilde{U}\}$ where $s+ :\widetilde{U}\rightarrow\mathbf{2}$ extends $s$ by mapping $(B,n+1)$ to $+$. As a side remark, note that the presence of an analogously defined section $\sigma-$ in $T(\widetilde{U})$ does not necessarily imply the presence of $\sigma$ in $S(U)$.

Now let $t':=s'+$. We have $t'\in T(\widetilde{U'})$ and we can define the analogue of the proposition $\varphi_{s'}$ as
\begin{align*}
\varphi_{t'}&=\bigwedge_{x\in\widetilde{U'},\, t'(x)=+} x \AND \bigwedge_{x\in\widetilde{U'},\, t'(x)=-}\neg x\\
&=\bigwedge_{x\in U',\,s'(x)=+} x \AND \bigwedge_{x\in U',\,s'(x)=-} \neg x \AND z_{n+1}\\
&=\varphi_{s'}\AND z_{n+1}
\end{align*}
where $z_{n+1}$ denotes the boolean variable corresponding to the outcome on the $n+1$-qubit. Recall that $+$ stands for true and $-$ stands for false.

It also holds that 
\begin{align*}
&\varphi_{\widetilde{U}}=\bigvee_{t\in T(\widetilde{U})}\varphi_t=\bigvee_{t\in T(\widetilde{U}),\,t=s+} \varphi_t \vee \bigvee_{t\in T(\widetilde{U}),\, t=s-}\varphi_t\\
=&\left(\bigvee_{s\in S(U)}(\varphi_s\AND z_{n+1})\right) \vee \left(\bigvee_{t\in T(\widetilde{U}),\, t=\sigma-} \varphi_{\sigma}\AND \neg z_{n+1}\right)\\
=&\left(\left(\bigvee_{s\in S(U)}\varphi_s\right)\AND z_{n+1} \right)\vee \left(\underbrace{\left(\bigvee_{t=\sigma-\in T(\widetilde{U})} \varphi_{\sigma} \right)}_{\gamma_U} \AND \neg z_{n+1}\right)\\
=&\left(\varphi_U\AND z_{n+1}\right)\vee \left(\gamma_U \AND \neg z_{n+1}\right)
\end{align*}

We can now define the formula $\Omega$ which specifies the joint outcome $t'=s'+\in \widetilde{U'}$ as well as the joint outcomes within the supports of all compatible sets of measurements $\widetilde{U}\neq \widetilde{U'}$. This formula is just the conjunction of $\Psi$ and $z_{n+1}$. In order to show that $|\omega\rangle$ is logically contextual, it suffices to show that $\Omega$ has no satisfiable assignment. This is indeed the case, as the fact that $\Psi$ is not satisfiable implies that $\Omega$ is also not satisfiable, thus completing our proof. Indeed, we have
\begin{align*}
\Omega &=\varphi_{t'}\AND\bigwedge_{\widetilde{U}\in \mathcal{\widetilde{U}}\backslash\widetilde{U'}} \varphi_{\widetilde{U'}} \\
&=(\varphi_{s'}\AND z_{n+1})\AND \bigwedge_{\widetilde{U}\in \mathcal{\widetilde{U}}\backslash\widetilde{U'}} \left[ (\varphi_{\widetilde{U}}\AND z_{n+1})\vee (\gamma_U \AND \neg z_{n+1})\right] \\
&=\bigwedge_{\widetilde{U}\in \mathcal{\widetilde{U}}\backslash\widetilde{U'}}\left[ (\varphi_{s'}\AND z_{n+1})\AND ((\varphi_{\widetilde{U}}\AND z_{n+1})\vee(\gamma_U\AND z_{n+1}))\right]\\
&=\bigwedge_{\widetilde{U}\in \mathcal{\widetilde{U}}\backslash\widetilde{U'}}\left[ (\varphi_{s'}\AND z_{n+1})\AND(\varphi_{\widetilde{U}}\AND z_{n+1})\right]\\
&=\varphi_{s'}\AND\left(\bigwedge_{\widetilde{U}\in \mathcal{\widetilde{U}}\backslash\widetilde{U'}} \varphi_{\widetilde{U}}\right)\AND z_{n+1} = \Psi\AND z_{n+1} 
\end{align*}
\end{proof}

The proof of the Going Up Lemma I follows by a similar argument, where the observables are augmented by the  $Z$ measurement for the $n+1$-th party.

The Going Up lemmas have the following useful corollary.

\corr*

\begin{proof}
We argue by induction on the number of qubits in $\ket{\eta}$. 
If $\ket{\eta} = \alpha \ket{0} + \beta\ket{1}$, then by bilinearity of the tensor product,
\[ \ket{\theta} \otimes \ket{\eta} = \alpha \ket{\theta}\ket{0} + \beta \ket{\theta} \ket{1} \]
and we can apply the Going Up Lemma I.

For the inductive case, we can write
\[ \ket{\eta} = \alpha \ket{\eta_0} \ket{0} + \beta \ket{\eta_1} \ket{1} \]
and by bilinearity
\[ \ket{\theta} \otimes \ket{\eta} = \alpha \ket{\theta}\ket{\eta_0} \ket{0} + \beta \ket{\theta} \ket{\eta_1} \ket{1} . \]
By induction hypothesis, $\ket{\theta}\ket{\eta_0}$ and $\ket{\theta}\ket{\eta_1}$ are logically contextual, and we can apply the Going Up Lemma I again to conclude.
\end{proof}

\subsection{The Small Difference lemma}

\smalldiff*

\begin{proof}
Firstly, we note that each state $\ket{\psi}$ in $\Pn$ has an \emph{entanglement type}, which can be described by a graph on $n$ vertices with an edge from $i$ to $j$ when the corresponding qubits of $\ket{\psi}$ are maximally entangled.
There are finitely many such graphs, and we can partition $\Pn$ into $P_1, \ldots , P_M$ according to the entanglement type.
All the states in each $P_l$ are LU equivalent.

As before, we can write $\ket{\psi}$, up to permutation of tensor factors, as
\begin{equation}
\label{psieq}
\ket{\psi} \; = \; \ket{\psi_1} \otimes \cdots \otimes \ket{\psi_k} 
\end{equation}
where each $\ket{\psi_i}$ is either a 1-qubit state, or a 2-qubit maximally entangled state.

We recall the definition of the parameterised family of states $\tau(a)$, $a \in [0, 1]$:
\[ \tau(a) \; = \; a \ket{\psi} + g(a) \ket{\phi}  \]
where $g(a) = \sqrt{1 - a^2}$.

Now for each $\delta \in [0, 1]$, we define a set
\[ R_{\delta}=\{  \tau(a) \mid 1-\delta < a \leq 1\} \]
Under our assumption on $\tau(a)$, each set $R_{\delta}$, which is infinite,  is partitioned among the sets $P_1, \ldots , P_M$.
Also, $\delta < \delta'$ implies $R_{\delta} \supset R_{\delta'}$. Hence, by an application of K\"onig's infinity lemma \cite{levy2012basic}, we can conclude that there is  $l$ with $1 \leq l \leq M$, and  an infinite increasing sequence $\{ a_i \}$ with supremum $1$, such that $\tau(a_i)$ is in $P_l$ for all $i$.

Since all the states $\tau(a_i)$ are LU-equivalent, we can express them in terms of a representative state $\ket{\Theta} \in P_l$ as
\begin{equation}\label{theta}
\tau(a_i) \; = \; U^1_{a_i}\otimes U^2_{a_i}\otimes\ldots\otimes U^n_{a_i} |\Theta\rangle
\end{equation}
Since $\Un(2)^n$ is compact, there is a convergent subsequence $\{U^1_{b_i}\otimes \ldots\otimes U^n_{b_i}\}_{b_i}$, whose limit as $i\rightarrow\infty$ is ${W^1}\otimes\ldots \otimes {W^n}$.
The limit of the corresponding subsequence $\{ a_{b_i} \}$ is still $1$.
Hence $\ket{\psi}$ is also a member of $P_l$, as
\begin{equation*}\label{psi}
|\psi\rangle \; = \; \lim_{i\rightarrow\infty} \tau(a_i) \; = \; \lim_{i\rightarrow\infty} \tau(a_{b_i}) \; = \;  {W^1}\otimes\ldots\otimes {W^n}|\Theta\rangle .
\end{equation*}
This, together with Equation (\ref{theta}), implies that we can express each $\tau(a_i)$ as
\[ \tau(a_i) \;  = \; [(U^1_{a_i}{W^1}^{\dagger})\otimes\ldots\otimes (U^n_{a_i}{W^n}^{\dagger})]|\psi\rangle . \]
Equivalently, using Equation \ref{psieq} and the definition of $\tau(a_i)$, we can obtain a family of equations, one for each $a_i$:
\begin{align}\label{psiphi}
a_i |\psi\rangle + g(a_i)|\phi\rangle &=[(U^1_{a_i}{W^1}^{\dagger})\otimes\ldots\otimes (U^n_{a_i}{W^n}^{\dagger})]|\psi\rangle  \nonumber \\
&= |\psi_1^{i}\rangle\otimes\ldots\otimes |\psi_{k}^{i}\rangle
\end{align}
where $|\psi_{j}^{i}\rangle$ is the state obtained after the LU transformation of $|\psi_j\rangle$, $1 \leq j \leq k$.

We shall now make use of the ``partial inner product'' operation described e.g.~in \cite[p.~129]{QPSI}.  We will use this operation to probe the components of~(\ref{psiphi}).

By Lemma~\ref{tensorthlemm}, if for all $j$, and for any state $\ket{\psi_j^{\perp}}$ orthogonal to $|\psi_j\rangle$, the application of $\ket{\psi_j^{\perp}}$ to $|\phi\rangle$ results in a null vector, we must have  $|\phi\rangle=|\psi\rangle$, and the lemma is proved. 

Otherwise, assume there is some $j$, and some $\ket{\psi_j^{\perp}}$, such that $\langle\psi_j^{\perp}|\phi\rangle$ is a non-zero vector.
For ease of notation, we take $j=k$. 

Applying $\langle\psi_k^{\perp}|$ on both sides of Equation (\ref{psiphi}), we obtain
\begin{align}
g(a_i)\langle\psi_k^{\perp}|\phi\rangle \; &= \; |\psi_1^{i}\rangle\otimes\ldots\otimes |\psi_{k-1}^{i}\rangle \underbrace{\langle\psi_k^{\perp}|\psi_k^{i}\rangle}_{\neq 0} \\
%\mbox{and hence} \\
\underbrace{\langle\psi_k^{\perp}|\phi\rangle}_{constant\ vector} \; &= \; \epsilon_i |\psi_1^{i}\rangle\otimes\ldots\otimes |\psi_{k-1}^{i}\rangle \label{final}
\end{align}
The fact that the LHS of~(\ref{final}) is constant implies that for any $i$ and $j$ we must have $\epsilon_i=\epsilon_j$. We write $\epsilon$ for this common value.
We must also have $|\psi_t^{i}\rangle=|\psi_t^{j}\rangle$ for all $t<k$, and we write $\ket{\psi_t'}$ for the common value. In fact we have 
\begin{equation}\label{limit}
|\psi_t\rangle=\lim_{i\rightarrow\infty}|\psi_t^{i}\rangle=|\psi_t'\rangle, \quad 1 \leq t <k 
\end{equation}
This means that we can rewrite Equation (\ref{final}) as 
\[ \langle\psi_k^{\perp}|\phi\rangle \; = \; \epsilon |\psi_1\rangle\otimes\ldots\otimes |\psi_{k-1}\rangle . \]
A similar analysis will apply to any state $\ket{\eta}$ orthogonal to $\ket{\psi_k}$ for which $\langle \eta | \phi \rangle$ is non-zero.
Using Equation (6.48) from \cite{QPSI}, and bilinearity of the tensor product, we obtain
\begin{equation}\label{form}
|\phi\rangle \; = \; x|\phi_2\rangle\otimes|\psi_k\rangle + \epsilon|\psi_1\rangle\otimes\ldots\otimes|\psi_{k-1}\rangle\otimes \ket{\xi}\
\end{equation}
for some $\ket{\phi_2}$ and $\ket{\xi}$.
We can use Equation (\ref{limit}) to rewrite Equation (\ref{psiphi}) as
\[ a_i |\psi\rangle + g(a_i) |\phi\rangle = |\psi_1\rangle\otimes\ldots\otimes |\psi_{k-1}\rangle\otimes|\psi_{k}^{i}\rangle . \]
For any $j<k$, if we apply any state $\ket{\psi_j^{\perp}}$, orthogonal to $|\psi_j\rangle$, to the above equation we obtain $\langle\psi_j^{\perp}|\phi\rangle=0$. 
Together with Equation (\ref{form}), this implies that $|\phi_2\rangle=|\psi_1\rangle\otimes\ldots\otimes|\psi_{k-1}\rangle$. 
So $|\phi\rangle$ and $|\psi\rangle$ can differ by at most one component. 

These  components cannot be  two-qubit maximally entangled states, since by assumption all linear combinations $\tau(a)$ of $|\psi\rangle$ and $|\phi\rangle$, with $a \in [0, 1]$, belong to $\mathcal{P}_n$, and hence, using bilinearity again, the corresponding linear combinations  of these components  would also have to be maximally entangled, yielding a contradiction.

Thus we conclude that $\ket{\psi}$ and $\ket{\phi}$ can differ at most in a  one-qubit component.
\end{proof}

\subsection{The 21 lemma}

We shall need some elementary facts about partial traces (see e.g.~\cite{nielsen2000quantum})):
\begin{itemize}
\item A pure state in $\HH \otimes \KK$ is a product state  if and only if tracing out over $\HH$ results in a pure state.
\item Tracing out over one party of a maximally entangled bipartite state yields a maximally mixed state.
\item A mixed state $\rho$ is pure if and only if $\Tr \rho^2 = 1$.
\end{itemize}

\lemmtwoone*

\begin{proof}

If a state belongs to $\mathcal{P}_n$ then all partial traces over $n-1$ parties result either in a pure state or in the maximally mixed state $\frac{1}{2}I_2$. 

We can express $|\psi\rangle$ and $|\phi\rangle$ as 
\begin{align*}
 |\psi\rangle \; &= \; \sum_{\sigma_i} a^0_{\sigma_i}|\sigma_i\rangle|0\rangle+ \sum_{\sigma_i} a^1_{\sigma_i}|\sigma_i\rangle|1\rangle\\
 |\phi\rangle \; &= \; \sum_{\sigma_i} b^0_{\sigma_i}|\sigma_i\rangle|0\rangle+ \sum_{\sigma_i} b^1_{\sigma_i}|\sigma_i\rangle|1\rangle
\end{align*}
where the $\sigma_i$ index the elements of the computational basis on $n-1$ qubits.

This means we can write the density matrix corresponding to $|\tau(a)\rangle=a|\phi\rangle + b |\psi\rangle$, with $b = g(a)$, as
\begin{align*}
 |\tau(a)\rangle\langle\tau(a)| \; = & \; \sum_{\sigma_i,\sigma_j}(a a^0_{\sigma_i}+b b^0_{\sigma_i})(a\overline{a^0_{\sigma_j}}+b\overline{b^0_{\sigma_j}})|\sigma_i\rangle\langle\sigma_j|\otimes |0\rangle\langle 0| \\
& \; + \sum_{\sigma_i,\sigma_j}(a a^0_{\sigma_i}+b b^0_{\sigma_i})(a\overline{a^1_{\sigma_j}}+b\overline{b^1_{\sigma_j}})|\sigma_i\rangle\langle\sigma_j|\otimes |0\rangle\langle 1|\\  
& \; + \sum_{\sigma_i,\sigma_j}(a a^1_{\sigma_i}+b b^1_{\sigma_i})(a\overline{a^0_{\sigma_j}}+b\overline{b^0_{\sigma_j}})|\sigma_i\rangle\langle\sigma_j|\otimes |1\rangle\langle 0|\\
& \; +  \sum_{\sigma_i,\sigma_j}(a a^1_{\sigma_i}+b b^1_{\sigma_i})(a\overline{a^1_{\sigma_j}}+b\overline{b^1_{\sigma_j}})|\sigma_i\rangle\langle\sigma_j|\otimes |1\rangle\langle 1|
\end{align*}

The partial trace over the first $n-1$ qubits of $|\tau(a)\rangle$ is given by
\begin{align*}
\rho_n \; = \; & \Tr_{n-1} |\tau(a)\rangle\langle\tau(a)| \; = \; \sum_{\sigma_i} \langle\sigma_i|\tau(a)\rangle\langle\tau(a)|\sigma_i\rangle \\
= \; &\sum_{\sigma_i}(a a^0_{\sigma_i}+b b^0_{\sigma_i})(a\overline{a^0_{\sigma_i}}+b\overline{b^0_{\sigma_i}})|0\rangle\langle 0| + \sum_{\sigma_i}(a a^0_{\sigma_i}+b b^0_{\sigma_i})(a\overline{a^1_{\sigma_i}}+b\overline{b^1_{\sigma_i}})|0\rangle\langle 1| \\
& + \sum_{\sigma_i}(a a^1_{\sigma_i}+b b^1_{\sigma_i})(a\overline{a^0_{\sigma_i}}+b\overline{b^0_{\sigma_i}})|1\rangle\langle 0| + \sum_{\sigma_i}(a a^1_{\sigma_i}+b b^1_{\sigma_i})(a\overline{a^1_{\sigma_i}}+b\overline{b^1_{\sigma_i}})|1\rangle\langle 1| \\
= \; &|0\rangle\langle 0|\sum_{\sigma_i}(a^2 a^0_{\sigma_i}\overline{a^0_{\sigma_i}} +(1-a^2) b^0_{\sigma_i}\overline{b^0_{\sigma_i}} + a\sqrt{1-a^2}(a^0_{\sigma_i}\overline{b^0_{\sigma_i}}+b^0_{\sigma_i}\overline{a^0_{\sigma_i}})) \\ 
\; + \; & |0\rangle\langle 1|\sum_{\sigma_i}(a^2 a^0_{\sigma_i}\overline{a^1_{\sigma_i}} +(1-a^2) b^0_{\sigma_i}\overline{b^1_{\sigma_i}} + a\sqrt{1-a^2}(a^0_{\sigma_i}\overline{b^1_{\sigma_i}}+b^0_{\sigma_i}\overline{a^1_{\sigma_i}})) \\
\; +\; & |1\rangle\langle 0|\sum_{\sigma_i}(a^2 a^1_{\sigma_i}\overline{a^0_{\sigma_i}} +(1-a^2) b^1_{\sigma_i}\overline{b^0_{\sigma_i}} + a\sqrt{1-a^2}(a^1_{\sigma_i}\overline{b^0_{\sigma_i}}+b^1_{\sigma_i}\overline{a^0_{\sigma_i}}))\\
\; +\; & |1\rangle\langle 1|\sum_{\sigma_i}(a^2 a^1_{\sigma_i}\overline{a^1_{\sigma_i}} +(1-a^2) b^1_{\sigma_i}\overline{b^1_{\sigma_i}} + a\sqrt{1-a^2}(a^1_{\sigma_i}\overline{b^1_{\sigma_i}}+b^1_{\sigma_i}\overline{a^1_{\sigma_i}})) 
\end{align*}

The partial trace $\rho_n$ is equal to the maximally mixed state if and only if 
\begin{align*}
 1/2 \; &= \; \sum_{\sigma_i}(a^2 a^0_{\sigma_i}\overline{a^0_{\sigma_i}} +(1-a^2) b^0_{\sigma_i}\overline{b^0_{\sigma_i}} + a\sqrt{1-a^2}(a^0_{\sigma_i}\overline{b^0_{\sigma_i}}+b^0_{\sigma_i}\overline{a^0_{\sigma_i}})) \\
0 \; &= \; \sum_{\sigma_i}(a^2 a^0_{\sigma_i}\overline{a^1_{\sigma_i}} +(1-a^2) b^0_{\sigma_i}\overline{b^1_{\sigma_i}} + a\sqrt{1-a^2}(a^0_{\sigma_i}\overline{b^1_{\sigma_i}}+b^0_{\sigma_i}\overline{a^1_{\sigma_i}}))\\
0\; &= \; \sum_{\sigma_i}(a^2 a^1_{\sigma_i}\overline{a^0_{\sigma_i}} +(1-a^2) b^1_{\sigma_i}\overline{b^0_{\sigma_i}} + a\sqrt{1-a^2}(a^1_{\sigma_i}\overline{b^0_{\sigma_i}}+b^1_{\sigma_i}\overline{a^0_{\sigma_i}}))\\
1/2 \; &= \; \sum_{\sigma_i}(a^2 a^1_{\sigma_i}\overline{a^1_{\sigma_i}} +(1-a^2) b^1_{\sigma_i}\overline{b^1_{\sigma_i}} + a\sqrt{1-a^2}(a^1_{\sigma_i}\overline{b^1_{\sigma_i}}+b^1_{\sigma_i}\overline{a^1_{\sigma_i}})) 
\end{align*}

Each of these equations yields a polynomial of degree 4 in $a$. Indeed, each equation has the form
\[ cab + q(a) = d \]
where $c$ and $d$ are constants, $b = \sqrt{1 - a^2}$, and $q(a)$ is a quadratic polynomial in $a$.
We can write this as
\[ cab  = -q(a) + d \]
and square both sides to obtain
\[ c^2a^2(1 - a^2) = (-q(a) + d)^2 \]
which is a quartic polynomial in $a$.
Hence, there can be at most 4 values of $a$  in $[0, 1]$ for which the partial trace $\rho_n$ is equal to a maximally mixed state.

On the other hand, $\rho_n$ is equal to a pure state if and only if $\Tr\rho_n^2=1$. By a similar analysis, this condition turns out to be equivalent to a polynomial equation of degree 16 in $a$. Unless the polynomial is degenerate, \ie the coefficients cancel so that the equation reduces to $1=1$, there can be at most 16 values of $a$ for which the partial trace $\rho_n$ is equal to a pure state.

Therefore, if there are more than $4+16=20$ values of $a$ in $[0, 1]$ for which the linear combination $\tau(a)$ belongs to $\mathcal{P}_n$, we can conclude that one of the polynomial equations above was degenerate, hence $\tau(a) \in\mathcal{P}_n$ for all values of $a$. 
\end{proof}

\paragraph{Remark} If $\ket{\psi}$ and $\ket{\phi}$ differ by one qubit, the partial trace of $\tau(a)$ for any value of $a$ will yield a pure state, and hence we will always have $\Tr\rho_n^2=1$. Thus the polynomial will indeed be degenerate in this case. This shows the necessity for the Small Difference Lemma.

\section{The algorithm}

We now give an explicit, albeit informal description of the algorithm which follows straightforwardly from our results.

We begin with a subroutine which we will use to test if a state is in $\Pn$.

\vspace{.1in}
\begin{tabular}{ll}
\textsc{subroutine} & \textsf{Test}$\Pn$  \\

\textbf{Input} & $n$-qubit quantum state $\ket{\theta}$ \\
\textbf{Output} & Either \\
& \textsf{Yes}, and entanglement type of $\ket{\theta}$, or \\
& \textsf{No}
\end{tabular}

\begin{enumerate}
\item Compute the $n-1$ partial traces $\rho_i$ over $n-1$ qubits of $\ket{\theta}$.
If any $\rho_i$ is not a maximally mixed state, compute $\Tr \rho_i^2$.
If $\Tr \rho_i^2 \neq 1$, return \textsf{No}.

We now have the list $\{ i_1, \ldots , i_k \}$ of indices for which the maximally mixed state was returned.

\item For each $i_p$ in the list, find its ``partner'' $i_q$ by computing the partial traces $\rho_{i_p,i_q}$ over $n-2$ qubits, and then testing if $\Tr \rho^2_{i_p,i_q} = 1$.\\
If we cannot find the partner for some $i_p$, return \textsf{No}.

\item Otherwise, we return \textsf{Yes}.
We also have the complete entanglement type of $\ket{\theta}$, and we have computed all the single-qubit components. $\Box$
\end{enumerate}

\begin{tabular}{ll}
\textsc{algorithm} & \\
\textbf{Input} & An $n$-qubit state $\ket{\omega}$ \\

\textbf{Output} & Either \\
& \textsf{Yes} if $\ket{\omega}$ is logically contextual, \\ 
& together with a list of $n+2$ local observables, or \\
& \textsf{No} if $\ket{\omega}$ is in $\Pn$.
\end{tabular}

\subsection*{Base Cases}

\begin{enumerate}
\item If $n=1$, output \textsf{No}.
\item If $n=2$, apply the Hardy procedure of the Base Case Lemma to the Schmidt decomposition of $\ket{\omega}$.
\end{enumerate}

\subsection*{Recursive Case: $n+1$, $n>1$}

\begin{enumerate}
\item We apply \textsf{Test}$\PP_{n+1}$ to $\ket{\omega}$. If $\ket{\omega}$ is in $\PP_{n+1}$, return \textsf{No}.

\item Otherwise,
we write
\[ \ket{\omega} \; = \; \alpha \ket{\psi} \ket{0} + \beta \ket{\phi} \ket{1} . \]
Explicitly, if $\ket{\omega}$ is represented by a $2^{n+1}$-dimensional complex vector
\[ \sum_{\sigma \in \{ 0, 1 \}^{n+1}} a_{\sigma} \ket{\sigma} \]
in the computational basis, we can define
\[ \alpha = \sqrt{\sum_{\sigma \in \{ 0, 1 \}^n} | a_{\sigma 0} |^2}, \qquad \beta = \sqrt{\sum_{\sigma \in \{ 0, 1 \}^n} | a_{\sigma 1} |^2} \]
\[ \ket{\psi} = \frac{1}{\alpha} \sum_{\sigma \in \{ 0, 1 \}^n} a_{\sigma 0} \ket{\sigma}, \qquad \ket{\phi} = \frac{1}{\beta} \sum_{\sigma \in \{ 0, 1 \}^n} a_{\sigma 1} \ket{\sigma} . \]

\item We apply \textsf{Test}$\Pn$ to $\ket{\psi}$. If $\ket{\psi}$ is not in $\Pn$, we proceed recursively with $\ket{\psi}$, and then extend the observables using the construction of the Going Up Lemma I. 

\item Otherwise, we proceed similarly with $\ket{\phi}$.

\item Otherwise, both $\ket{\psi}$ and $\ket{\phi}$ are in $\Pn$. \\
For $a$ in $(0, 1)$, we define
\[ \tau(a) \; := \; a \ket{\psi} + \sqrt{1 - a^2} \ket{\phi} . \]
For $19$ distinct values in $(0, 1)$, we assign these values to $a$, and apply \textsf{Test}$\Pn$ to $\tau(a)$.

If we find a value of $a$ for which $\tau(a)$ is not in $\Pn$, we use that value to compute the local observable $B(\frac{\alpha}{a},\frac{\beta}{\sqrt{1-a^2}})$ for the $n+1$-th party, as specified in the Going Up Lemma II, and continue the recursion with the $n$-qubit state $\tau(a)$.

\item Otherwise, by the 21 Lemma and the Small Difference Lemma, the only remaining case is where $\ket{\psi}$ and $\ket{\phi}$ differ in one qubit. We have these qubits $\ket{\psi_1}$, $\ket{\phi_1}$ from our previous applications of \textsf{Test}$\Pn$.
In this final case, we can write $\ket{\omega}$ as
\[ \ket{\omega} = \ket{\Psi} \otimes \ket{\xi} \]
where $\ket{\Psi}$ is in $\PP_{n-1}$, and $\ket{\xi}$ is a 2-qubit state. 
Moreover, we have 
\[ \ket{\xi} = \alpha \ket{\psi_1}\ket{0} + \beta \ket{\phi_1}\ket{1} . \]

\item We apply the Base Case procedure to $\ket{\xi}$, which we know cannot be maximally entangled, by Step 1.
We output \textsf{Yes}, together with the two local observables for each party produced by the Hardy construction, and the $n-2$ local observables for $\ket{\Psi}$ produced by the Corollary to the Going Up lemmas. $\Box$
\end{enumerate}

The above algorithm of course involves computation over the real and complex numbers. More precisely, with the usual coding of complex numbers as pairs of reals, we require the field operations and comparison tests  on real numbers. For simplicity, we discuss the complexity of the algorithm in the Blum-Shub-Smale model of computation \cite{blum1989theory}, where we assume that arbitrary real numbers can be stored, and the above operations performed, with unit cost.
Thus the input size of an $n$-qubit state is the dimension $d = 2^n$.

The \textsf{Test}$\Pn$ subroutine performs $n-1$ partial traces over $n-1$ qubits. Each such partial trace involves computing the 4 entries of a matrix, where each entry is a sum over $2^{n-1}$ products. It also computes $O(n^2)$ partial traces over $n-2$ qubits, each of which involves computing 16 entries, each a sum over $2^{n-2}$ products. 
For an $n$-qubit input, at each level of the recursion, we call the subroutine a number of times bounded by a constant, and the recursion terminates in $O(n)$ steps.
Thus we obtain a complexity bound of $O(d \log^3 d)$ operations, in the input size $d$. Of course, in practice the limiting factor is the exponential size of the classical  representation of a quantum state.

The algorithm has been implemented in \textsf{Mathematica}\texttrademark, and has been tested on input states of up to 10 qubits.

\section{Final Remarks}

Our results provide a complete answer to the question we posed in the Introduction, as to which quantum states give rise to logical contextuality, for the case of pure $n$-qubit states. There are natural generalisations to qudits and mixed states, which we leave to future work. 

Given the paucity of structural results on multipartite entanglement, we feel that the perspective offered by the question of the highest degree of contextuality which states can achieve may be fruitful. This is particularly so in the light of recent results showing that contextuality is a key ingredient enabling quantum computation \cite{raussendorf2013contextuality,howard:14}.
The fact that our results apply to \emph{all} multipartite entangled states, with certain bipartite exceptions, is striking.
It shows that the probability- and inequality-free, logical formulation of contextuality and non-locality is not a rare occurrence, but in fact arises for almost all states.
We also note that these qualitative arguments can easily be turned into quantitative ones. It is shown in \cite{abramsky2012logical} how any instance of logical contextuality gives rise to a Bell inequality based on logical consistency conditions, which allows for  quantitative, robust experimental tests.
Our results also add force to the argument, which we have made elsewhere \cite{abramsky2014classification}, that the bipartite case, while by far the most studied in non-locality theory, may actually be anomalous. This certainly proves to be the case for logical contextuality, as our results show.

Moreover, if we consider the notion of \emph{strong contextuality}, also introduced in \cite{abramsky2011unified}, we obtain additional evidence in this direction.
We recall that an empirical model is strongly contextual if the set of formulas corresponding to the supports for the various contexts are not simultaneously satisfiable; that is, there is no assignment whatsoever to all the measurement variables which is consistent with the constraints imposed  by the model.
This is clearly stronger than logical contextuality, which only requires that there is \emph{some} local joint outcome which cannot be extended to the whole set of variables. Strong contextuality is a natural and salient notion with a number of equivalent characterisations. Again, the bipartite case proves anomalous.
In fact, the only strongly contextual bipartite models are the PR-boxes \cite{lal:11,abramsky2011unified}, which are of course not quantum realizable \cite{popescu1994quantum}. By contrast, for all $n>2$, the $n$-partite GHZ states are strongly contextual \cite{abramsky2011unified}.

This leads us on to the \emph{main question} which is the natural next challenge following on from the one we have addressed in this paper:

\begin{framed}
%\begin{quotation}
\noindent For which quantum states can we find local observables which give rise to a strongly contextual empirical model?
%\end{quotation}
\end{framed}

This question remains open, and appears difficult. It has an interesting relation to the question of ``All-versus-Nothing'' arguments \cite{mermin1990simple}, which have recently been studied in the sheaf-theoretic approach \cite{abramsky2015contextuality}, and shown to be related to the cohomological witnesses for contextuality previously introduced in \cite{abramsky:11a}.
All currently known examples of strong contextuality arising in quantum mechanics come from All-versus-Nothing arguments. Determining whether this is true in general is another challenging problem, which may hold the key to the main question.

\section*{References}

\bibliographystyle{elsarticle-num}
\bibliography{bdbib}
\end{document}